\documentclass[11pt]{article}  

\usepackage{amssymb}
\usepackage{amsmath}
\usepackage{amsthm}
\usepackage{color}
\usepackage{colortbl}
\usepackage{graphicx}
\usepackage{hyperref}
\usepackage{fullpage}
\usepackage{enumerate}
\usepackage{xspace}
\usepackage{algorithm}
\usepackage{caption}
\usepackage{subcaption}
\usepackage{algpseudocode}
\usepackage{paralist}

\usepackage{lmodern}
\usepackage[T1]{fontenc}

\newcommand{\R}{\mathbb{R}}

\newcommand{\E}{\mathbb{E}}
\newcommand{\Ex}{\mathbb{E}}

\newcommand{\OPT}{\mathrm{OPT}}

\newtheorem{theorem}{Theorem}
\newtheorem{lemma}[theorem]{Lemma}
\newtheorem{corollary}[theorem]{Corollary}
\newtheorem{claim}[theorem]{Claim}

\newtheorem{observation}[theorem]{Observation}

\newcommand{\etal}{\emph{et al.}\xspace}
\newcommand{\eps}{\epsilon}
\DeclareMathOperator{\argmax}{argmax}



\begin{document}
\title{A Nearly-linear Time Algorithm for Submodular Maximization with a Knapsack Constraint}

\author{
Alina Ene\thanks{Department of Computer Science, Boston University, {\tt aene@bu.edu}.}
\and
Huy L. Nguy\~{\^{e}}n\thanks{College of Computer and Information Science, Northeastern University, {\tt hlnguyen@cs.princeton.edu}.} 
}
\date{}

\maketitle

\begin{abstract}
We consider the problem of maximizing a monotone submodular function subject to a knapsack constraint. Our main contribution is an algorithm that achieves a nearly-optimal, $1 - 1/e - \eps$ approximation, using $(1/\eps)^{O(1/\eps^4)} n \log^2{n}$ function evaluations and arithmetic operations. Our algorithm is impractical but theoretically interesting, since it overcomes a fundamental running time bottleneck of the multilinear extension relaxation framework. This is the main approach for obtaining nearly-optimal approximation guarantees for important classes of constraints but it leads to $\Omega(n^2)$ running times, since evaluating the multilinear extension is expensive. Our algorithm maintains a fractional solution with only a constant number of entries that are strictly fractional, which allows us to overcome this obstacle.  
\end{abstract}

\section{Introduction}
\label{sec:intro}

A set function $f:2^{V}\rightarrow \R$ is {\em submodular} if for every $A, B \subseteq V$, we have $f(A)+f(B) \ge f(A\cup B) + f(A\cap B)$. Submodular functions naturally arise in a variety of contexts, both in theory and practice. Submodular functions capture many well-studied combinatorial functions including cut functions of graphs and digraphs, weighted coverage functions, as well as continuous functions including the Shannon entropy and log-determinants. Submodular functions are used in a wide range of application domains from machine learning to economics. In machine learning, it is used for document summarization~\cite{LinB10}, sensor placement~\cite{KrauseSG08}, exemplar clustering~\cite{GomesK10}, potential functions for image segmentation~\cite{JegelkaB11}, etc. In an economics context, it can be used to model market expansion~\cite{DughmiRS12}, influence in social networks~\cite{Kempe2003}, etc. The core mathematical problem underpinning many of these applications is the meta problem of maximizing a submodular objective function subject to some constraints. 

A common approach to submodular maximization is a two-step framework based on the multilinear extension $F$ of $f$, a continuous function that extends $f$ to the domain $[0,1]^V$. The program first (1) maximizes $F(x)$ subject to a continuous relaxation of the constraint and then (2) rounds the solution $x$ to an integral vector satisfying the constraint. This paradigm has been very successful and it has led to the current best approximation algorithms for a wide variety of constraints including cardinality constraints, knapsack constraints, matroid constraints, etc. One downside with this approach is that in general, evaluating the multilinear extension is expensive and it is usually approximately evaluated. To achieve the desirable approximation guarantees, the evaluation error needs to be very small and in a lot of cases, the error needs to be $O(n^{-1})$ times the function value. Thus, even an efficient algorithm with $O(n)$ queries to the multilinear extension would require $\Omega(n^2)$ running time.

In this work, we develop a new algorithm that achieves $1-1/e-\eps$ approximation for maximizing a monotone submodular function subject to a knapsack constraint. The basic approach is still based on the multilinear extension but the algorithm ensures that the number of fractional coordinates is constant, which allows evaluating the multilinear extension exactly in constant number of queries to the original function. This approach allows us to bypass the obstructions discussed above and get nearly linear running time.

\begin{theorem}
  There is an algorithm for maximizing a monotone submodular function subject to a knapsack constraint that achieves a $1 - 1/e - \eps$ approximation using $(1/\eps)^{O(1/\eps^4)} n \log{n}$ function evaluations and $(1/\eps)^{O(1/\eps^4)} n \log^2{n}$ arithmetic operations.
\end{theorem}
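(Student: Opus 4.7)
The plan is to combine three ingredients. First, enumerate the ``heavy'' items of $\OPT$ so as to reduce to a residual instance in which no single item contributes more than an $O(\eps)$ fraction of $f(\OPT)$. Second, run a threshold-based discretization of continuous greedy on the residual instance, using geometrically decreasing density thresholds. Third, schedule the sweeps so that at every point in time the fractional solution $x$ has only $\poly(1/\eps)$ strictly fractional coordinates, which allows exact evaluation of the multilinear extension $F$ by brute-force over the fractional support and thereby bypasses the $\Omega(n^2)$ bottleneck discussed in the introduction.

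I would begin by guessing, via enumeration, a set $S$ of the $K = O(1/\eps)$ items of $\OPT$ with the largest marginal values, tentatively adding them to the solution and reducing the knapsack budget accordingly. The naive bound of $n^{O(1/\eps)}$ guesses can be cut down to $\tilde O(n)$ total by first guessing a density threshold $f(\OPT)/B$ via powers of $(1+\eps)$ and then peeling heavy items through a greedy scan. After this reduction, each surviving item has marginal value at most $\eps \cdot f(\OPT)$, so the threshold continuous greedy below cannot overshoot by more than $\eps$ in a single step.

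On the reduced instance, I would run a threshold continuous greedy with geometrically decreasing density thresholds $\tau_j = \tau_0 (1-\eps)^j$ for $j = 0, 1, \ldots, L$ where $L = O((1/\eps)\log(n/\eps))$. For each threshold $\tau_j$, sweep through the items once; for each item $e$ whose multilinear marginal density $\partial_e F(x)/c_e$ exceeds $\tau_j$, raise $x_e$ to $1$ (charging $c_e$ to the budget), unless doing so would overflow the budget, in which case set $x_e$ to the maximum feasible value and end the sweep. Each sweep therefore produces at most one new strictly fractional coordinate, and once a coordinate is set to $1$ it is never touched again, so the number of fractional coordinates at any moment is bounded by $L = \poly(1/\eps)$. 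Consequently, every value $F(x)$ or $\partial_e F(x)$ needed during the algorithm can be computed exactly by enumerating all $2^{\poly(1/\eps)}$ integral completions of the fractional support. Maintaining the candidate items per threshold with a priority queue leads to $(1/\eps)^{O(1/\eps^4)} n \log^2 n$ total work; at the end, swap-rounding on the $\poly(1/\eps)$ fractional coordinates yields an integral feasible solution.

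The approximation analysis then mirrors the standard continuous-greedy argument: at each step where $x_e$ is raised under threshold $\tau$, the gain in $F$ is at least $(1-\eps)$ times $(\partial_e F(x)/c_e)\,\Delta c$, and submodularity plus the heavy-item reduction yield the differential inequality $\Delta F \geq (1-\eps)\bigl(f(\OPT) - F(x)\bigr)\Delta t$, integrating to $1 - 1/e - \eps$. The main obstacle I expect is proving rigorously that processing items one at a time within a single threshold sweep, rather than simultaneously as in the continuous formulation, does not break this analysis; monotonicity of the multilinear derivatives ($\partial_e F$ is non-increasing in $x$) is the key fact that saves us, but carefully accounting for the discretization error, the interplay between the heavy-item guess and the threshold scale, and the cost of exact $F$-evaluations is where the $(1/\eps)^{O(1/\eps^4)}$ prefactor comes from.
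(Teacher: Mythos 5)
There is a genuine gap, and it sits exactly at the point your proposal waves past: the enumeration of the heavy items and the budget they consume. Identifying the $O(1/\eps)$ (in the paper, $1/\eps^3$) most valuable items of $\OPT$ cannot be done by ``guessing a density threshold and peeling heavy items through a greedy scan'': a greedy scan over $V$ will pick items that are not in $\OPT$, and nothing in your argument makes those picks cost-competitive with the true heavy items of $\OPT$, which is what the feasibility of the final solution hinges on. The paper's way around the $n^{O(1/\eps)}$ enumeration is to enumerate the \emph{discretized marginal values} of the heavy items (this is where the $(1/\eps)^{O(1/\eps^4)}$ factor comes from), and for each guessed value to take the \emph{cheapest} item achieving it --- and, crucially, to add that item only \emph{fractionally}, by $\eps$ per phase. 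Your proposal instead commits to the guessed heavy items integrally and ``reduces the knapsack budget accordingly,'' but you do not know their costs (you only guessed their values), so you do not know how much budget remains for the threshold-greedy phase; the paper identifies this as the central conceptual obstacle and resolves it with the fractional $\OPT_1$ stage, a separate guessing stage for the few large-marginal-value items of $\OPT_2$, a filtering step before density greedy, and a budget analysis (Lemma~\ref{lem2}) showing each phase spends at most $\eps(1-c(\OPT_1))$. None of these has a substitute in your plan, and without them the density-greedy sweeps can overflow the budget or the rounding can produce an infeasible set.

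A second, more mechanical problem: by your own accounting each threshold sweep can leave one new strictly fractional coordinate and there are $L = O((1/\eps)\log(n/\eps))$ sweeps, so the fractional support is $\Theta(\log n)$, not $\poly(1/\eps)$; exact evaluation of $F$ by enumerating completions of the support then costs $2^{\Theta(\log n)} = n^{\Theta(1)}$ per query, which destroys the near-linear running time you are trying to establish. (If instead a sweep that would overflow the budget terminates the whole algorithm, you are back to plain density greedy with partial enumeration, and the $1-1/e$ analysis reduces to the Sviridenko-style argument whose running time you have not actually improved.) The paper avoids this by making the \emph{only} source of fractional coordinates the $\OPT_1$ stages --- $O(1/\eps^4)$ coordinates in total, independent of $n$ --- and by selecting everything in the large-value and density-greedy stages integrally, with lazy evaluations supplying the $n\log n$ factor. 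The rounding also needs more than generic swap rounding on the fractional support: the paper must exhibit a cost-monotone assignment of fractional mass to $\OPT_1$ (property (2) of Theorem~\ref{thm:monotone-fractional}) to certify that the rounded solution fits the budget, which is another piece your sketch does not supply.
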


For simplicity, when stating running times, we assume that each call to the value oracle of $f$ takes constant time, since for the algorithms discussed the number of evaluations dominates the running time up to logarithmic factors.  Previously, Wolsey~\cite{wolsey1982maximising} gives an algorithm with a $1 - 1/e^{\beta} \approx 0.35$, where $\beta$ is the unique root of the equation $e^x = 2 - x$. Building on the work of Khuller \etal for the maximum $k$-coverage problem~\cite{khuller1999budgeted}, Sviridenko~\cite{Sviridenko04} gives an algorithm with a $1 - 1/e$ approximation that runs in $O(n^5)$ time. Badanidiyuru and Vondrak \cite{Badanidiyuru2014} give an algorithm with a $1 - 1/e - \eps$ approximation running in $n^2 ({\log{n} / \eps})^{O(1/\eps^8)}$ time. Our work builds on \cite{Badanidiyuru2014} and we discuss the relationship between the two algorithms in more detail in Section~\ref{sec:techniques-knapsack}.

Kulik \etal \cite{kulik2013approximations} obtain a $1-1/e-\eps$ approximation for $d$ knapsack constraints in time $\Omega(n^{d/\eps^4})$ that comes from enumerating over $d/\eps^4$ items. The techniques in this paper could likely be extended to obtain an algorithm for the continuous problem of maximizing the multilinear extension subject to $d$ knapsack constraints, with a running time that is exponential in $d$ and nearly-linear in $n$. We leave it as an open problem whether the rounding can also be extended to multiple knapsack constraints.

\smallskip
{\bf Remark on the algorithm of \cite{Badanidiyuru2014}.} We note that there are some technical issues in the algorithm proposed in \cite{Badanidiyuru2014} for a knapsack constraint. The main issue, which was pointed out by Yoshida~\cite{Yoshida16}, arises in the partitioning of the items into large and small items: an item $e$ is small if it has value $f(\{e\}) \leq \eps^6 f(\OPT)$ and cost $c_e \leq \eps^4$, and it is large otherwise. The algorithm enumerates the marginal values of the large items and thus the set of large items was intended to be of size $\mathrm{poly}(1/\eps)$. But this may not be true in general, as there could be many items in $\OPT$ with singleton value greater than $\eps^6 f(\OPT)$. On the other hand, the assumption that the small items have small singleton values is crucial to ensuring that the algorithm obtains a good value from the small items. Another issue arises in the rounding algorithm. The fractional solution is rounded using a rounding algorithm for a partition matroid that treats the parts independently. But in this setting an item participates in several parts and we need to ensure that it is not selected more than once.

\subsection{Our techniques}
\label{sec:techniques-knapsack}

As in the classical knapsack problem with a linear objective, the algorithms achieving optimal approximation are based on enumeration techniques. One such approach is to enumerate the most valuable items in $\OPT$ (in the submodular problem, we can determine which items of $\OPT$ are valuable based on the Greedy ordering of $\OPT$, see \ref{eq1}) and greedily pack the remaining items based on the marginal gain to cost density.  This approach leads to the optimal $1-1/e$ approximation provided that we enumerate $3$ items \cite{Sviridenko04}. The running time of the resulting algorithm is $O(n^5)$ and it can be improved to $O(n^4 \log(n/\eps)/\eps)$ time at a loss of $\eps$ in the approximation. 

A different approach, inspired by the algorithms for the classical knapsack problem that use dynamic programming over the (appropriately discretized) profits of the items, is to enumerate over the marginal gains of the valuable items of $\OPT$. Unlike the classical setting with linear profits, it is considerably more challenging to leverage such an approach in the submodular setting. Badanidiyuru and Vondrak \cite{Badanidiyuru2014} propose a new approach based on this enumeration technique and continuous density Greedy with a running time of $n^2 \left({\log{n} \over \eps}\right)^{O\left({1 \over \eps^8} \right)}$, which overcame the $\Omega(n^4)$ running time barrier for the approaches that are based on enumerating items.

In this work, we build on the approach introduced by \cite{Badanidiyuru2014} and we obtain a faster running time of $\left({1 \over \eps}\right)^{O\left({1 \over \eps^4}\right)} n \log^2{n}$. Our algorithm is impractical due to the high dependency on $\eps$, but it is theoretically interesting. Obtaining near-optimal approximations in nearly-linear time for submodular maximization has been out of reach for all but a cardinality constraint.

Obtaining a fast running time poses several conceptual and technical challenges, and we highlight some of them here. Let us denote the valuable items of $\OPT$ as $\OPT_1$, and let $\OPT_2 = \OPT \setminus \OPT_1$. For our algorithm, the set $\OPT_1$ has $\mathrm{poly}(1/\eps)$ items and we can handle them by enumerating over their marginal gains, appropriately discretized. Similarly to \cite{Badanidiyuru2014}, we use the guessed marginal gains to pack items that are competitive with $\OPT_1$: for each guessed marginal gain, we find the cheapest item whose marginal gain is at least the guessed value, and we add $\eps$ of the item to the fractional solution. The continuous approach is necessary for ensuring that we obtain a good approximation, but it is already introducing the following conceptual and technical difficulties:
\begin{compactenum}
\item \emph{We do not know how much budget is available for the remaining items.} Since we packed the items fractionally, we will need to perform the rounding to find out which of the items will be in the final solution and their total budget. But we cannot do the rounding before packing the remaining items. Additionally, we cannot afford to guess the budget of $\OPT_1$, even approximately.
\item \emph{In the continuous setting, evaluating the multilinear extension takes $\Omega(n^2)$ time in general.}
\item \emph{We will need to ensure that we can round the resulting fractional solution.}
\end{compactenum}
A key idea in our algorithm, and an important departure from the approach of \cite{Badanidiyuru2014}, is to \emph{integrally} pack the remaining items using density Greedy with lazy evaluations to obtain a nearly-linear running time. The resulting fractional solution has only a constant number of entries that are strictly fractional, and we show that this is beneficial both in terms of running time and rounding: we can evaluate the multilinear extension in constant time and we can exploit the special structure of the solution to round. However, the first difficulty mentioned above remains a significant conceptual barrier for realizing this plan: if we cannot get a handle on how much budget to allocate to density Greedy, we will not be able to round the solution without violating the budget or losing value. Our solution here is based on the following insights.

First, note that we may assume that every item in $\OPT_2$ has a cost that is small relative to the total budget of $\OPT_2$: there can only be a small number of heavy items and each of them has small marginal gain on top of $\OPT_1$, and thus we can discard them without losing too much in the approximation. Moreover, if there are no heavy items at all, we can show that density Greedy will not exceed the budget. Thus, if we knew the budget of $\OPT_2$, we could remove all of the heavy items and run density Greedy on the remaining items.

Unfortunately, we cannot guess the budget of $\OPT_2$ since there are too many possible choices. Instead, note that, since the cost of an item is its marginal value divided by its density, a heavy item has large value or small density. If it has small density then intuitively Greedy will not pick it. The problematic items are the ones that have large marginal values, as density Greedy may pick them and they may be too heavy. Unfortunately, we cannot filter out all the items with large marginal value, since those items may include items in $\OPT_2$ (note that even though every item in $\OPT_2$ has small marginal value on top of $\OPT_1$, it can have large marginal value on top of our current fractional solution that does not necessarily contain $\OPT_1$). Now the key observation is that the number of such items is small, and we can handle them with additional guessing.

The final step of the algorithm is to round the fractional solution to a feasible integral solution. Here we take advantage of the fact that the only entries that are strictly fractional were introduced in the $\OPT_1$ stages of the algorithm. The fractional items can be mapped to the items in $\OPT_1$ in such a way that every item in $\OPT_1$ is assigned a fractional mass of at most $1$ coming from items with smaller or equal cost. Thus, for each item in $\OPT_1$, we want to select one of the items fractionally assigned to it. This is reminiscent of a partition matroid and thus a natural approach is to use a matroid rounding algorithm such as pipage rounding or swap rounding. However, an item may be fractionally assigned to more than one item in $\OPT_1$, and we need to ensure that the rounding does not select the same item for different items in $\OPT_1$. We show that we can do so using a careful application of swap rounding.  

\begin{algorithm}[t]
\begin{algorithmic}[1]
\State $t \gets 1 / \eps^3$
\State $r \gets 1/\eps$
\State $M \gets \Theta(f(\OPT))$
\State $S_{\mathrm{best}} \gets \emptyset$
\State Try all possible sequences:
\State \quad $\{v_{p, i}\}$: $p \in \{1, 2, \ldots, 1/\eps\}$, $i \in \{1, 2, \ldots, t\}$, $v_{p, i} \in \{0, \eps M/t, 2 \eps M/t, \ldots, 1\}$
\State \quad $\{W_p\}$: $p \in \{1, 2, \ldots, 1/\eps\}$, $W_p \in \{0, \eps M, 2 \eps M, \ldots, M\}$
\State \quad $\{w_{p, i}\}$: $p \in \{1, 2, \ldots, 1/\eps\}$, $i \in \{1, 2, \ldots, r + 1\}$, $w_{p, i} \in \{0, \eps^2 W_p / r, 2 \eps^2 W_p / r, \ldots, W_p \}$ 
\For{every choice $\{v_{p, i}\}$, $\{W_p\}$, $\{w_{p, i}\}$}
    \State $x \gets \textproc{KnapsackGuess}(f, \epsilon, \{v_{p, i}\}, \{W_p\}, \{w_{p, i}\})$
    \State $S \gets \textproc{Round}(x)$
    \If{$f(S) > f(S_{\mathrm{best}})$}
      \State $S_{\mathrm{best}} \gets S$
    \EndIf
\EndFor
\State Return $S_{\mathrm{best}}$
\end{algorithmic}
\caption{$\textproc{Knapsack}(f, \epsilon)$}
\label{alg:monotone}
\end{algorithm}

\section{The algorithm}
\label{sec:knapsack-algo}

We consider the problem of maximizing a monotone submodular function subject to a single knapsack constraint. Each element $e \in V$ has a cost $c_e \in \mathbb{R}_+$, and the goal is to find a set $\OPT \in \argmax\{f(S) \colon \sum_{e \in S} c_e \leq 1\}$. We assume that the knapsack capacity is $1$, which we may assume without loss of generality by scaling the cost of each element by the knapsack capacity. We also assume without loss of generality that $f(\emptyset) = 0$.

We let $F: [0, 1]^V \rightarrow \mathbb{R}_+$ denote the multilinear extension $f$. For every $x \in [0, 1]^V$, we have
  \[ F(x) = \sum_{S \subseteq V} f(S) \prod_{e \in S} x_e \prod_{e \notin S} (1 - x_e) = \Ex[R(x)],\]
where $R(x)$ is a random set that includes each element $e \in V$ independently with probability $x_e$.

We fix an optimal solution to the problem that we denote by $\OPT$. We assume that the algorithm knows a constant approximation of $f(\OPT)$; such an approximation can be obtained in nearly linear time by tacking the best of the following two solutions: the solution obtained by running Density Greedy (implemented using lazy evaluations, similarly to Algorithm~\ref{alg:lazy-density-greedy}) and the solution consisting of the best single element. Let $f(\OPT) \ge M \ge (1-\eps)f(\OPT)$ denote the algorithm's guess for the optimal value. There are $O(1/\eps)$ choices for $M$ given the constant approximation of $f(\OPT)$.

We consider the following Greedy ordering of $\OPT$. We order $\OPT$ as $o_1, o_2, \dots, o_{|\OPT|}$, where
\begin{equation}
\label{eq1}
  o_i = \argmax_{o \in \OPT} (f(\{o_1,\ldots, o_{i-1}\} \cup \{o\}) - f(\{o_1, \ldots, o_{i - 1}\}))
\end{equation}
Let $t = O(1/\eps^3)$, $\OPT_1 = \{o_1, o_2, \dots, o_t\}$, and $\OPT_2 = \OPT \setminus \OPT_1$.

We emphasize that we use the above ordering of $\OPT$ and the partition of $\OPT$ into $\OPT_1$ and $\OPT_2$ only for the analysis and to motivate the choices of the algorithm. In particular, the algorithm does not know this ordering or partition.

It is useful to filter out from $\OPT_2$ the items that have large cost, more precisely, cost greater than $\eps^2 (1 - c(\OPT_1))$. Since every element $o \in \OPT_2$ satisfies $f(\OPT_1 \cup \{o\}) - f(\OPT_1) \leq \eps^3 f(\OPT_1)$ and there are at most $1/\eps^2$ such elements, this will lead to only an $\eps f(\OPT)$ loss (see Appendix~\ref{app:omitted}). For ease of notation, we use $\OPT_2$ to denote the set without these elements, i.e., we assume that $c_o \leq \eps^2 (1 - c(\OPT_1))$ for every $o \in \OPT_2$. 

Algorithm~\ref{alg:monotone} gives a precise description of the algorithm. The algorithm guesses a sequence of values as follows.

{\bf Guessed values.}  Throughout the paper, we assume for simplicity that $1/\eps$ is an integer. Recall that $t = 1/\eps^3$. Let $r = 1/\eps$ ($r$ is an upper bound on the number of items of $\OPT_2$ that have large marginal value in each phase).
\begin{compactitem}
  \item A sequence $\left\{v_{1,1}, v_{1,2}, \ldots, v_{{1 / \eps}, t}\right\}$ where $v_{p,i} \in \{0, \eps M/t, 2\eps M/t,\ldots, M\}$ is an integer multiple of $\eps M / t$, for all integers $p$ and $i$ such that $1 \leq p \leq 1/\eps$ and $1 \leq i \leq t$. The value $v_{p, i}$ is an approximate guess for the marginal value of $o_i \in \OPT_1$ during phase $p$. There are $t / \eps = 1/\eps^4$ choices for each $v_{p, i}$ and thus there are $(1/\eps^4)^{1/\eps^4} = (1/\eps)^{O(1/\eps^4)}$ possible sequences.
  \item A sequence $\left\{W_1, W_2, \ldots, W_{1/\eps} \right\}$ where $W_p \in \{0, \eps M, 2 \eps M, \ldots, M\}$ is an integer multiple of $\eps M$, for all integers $p$ such that $1 \leq p \leq 1/\eps$. The value $W_p$ is an approximate guess for the total marginal value of $\OPT_2$ in phase $p$. There are $1/\eps$ choices for each $W_p$ and thus there are $(1/\eps)^{1/\eps}$ possible sequences.
  \item A sequence $\left\{w_{1,1}, w_{1,2}, \ldots, w_{{1/\eps}, {1/\eps} + 1} \right\}$ where $w_{p, i} \in \{0, \eps^2 W_p / r, 2 \eps^2 W_p / r, \ldots, W_p \}$ is an integer multiple of $\eps^2 W_p / r$, for all integers $p$ and $i$ such that $1 \leq p, i \leq 1/\eps$ (the value $W_p$ is the same as in the sequence above). The values $w_{p, i}$, where $i \in \{1, 2, \ldots, 1/\eps\}$, are approximate guesses for the marginal values of the items in $\OPT_2$ with large marginal value in phase $p$. There are $r/(\eps^2 + \eps) = 1/(\eps^3 + \eps^2)$ choices for each $w_{p, i}$ and thus there are $(1/(\eps^3 + \eps^2))^{1/\eps^2} = (1/\eps)^{O(1/\eps^2)}$ possible sequences. 
\end{compactitem}

The algorithm enumerates all possible such sequences. For each choice, the algorithm works as follows. Let $\{v_{p, i}\}$, $\{W_p\}$, and $\{w_{p, i}\}$ denote the current sequences. The algorithm performs $1/\eps$ phases. Each phase is comprised of three stages, executed in sequence in this order: an $\OPT_1$ stage, a stage for the large value items in $\OPT_2$, and a Density Greedy stage. We describe each of these stages in turn.

\smallskip
{\bf The $\OPT_1$ stage of phase $p$.} This stage uses the values $\{v_{p, i} \colon 1 \leq i \leq t\}$ as follows. We perform $t$ iterations. In each iteration $i$, we consider the items not selected in previous iterations that have marginal value at least $v_{p, i}$ on top of the current solution, i.e., $F(x \vee \mathbf{1}_e) - F(x) \geq v_{p, i}$. Among these items, we select the item with minimum cost and increase its fractional value by $\eps$. Together, the $t$ iterations select $t$ different items and increase their fractional value by $\eps$. 

\smallskip
{\bf The stage of phase $p$ for the large value items in $\OPT_2$.} This stage uses the value $W_p$ and the values $\{w_{p, i} \colon 1 \leq i \leq 1/\eps\}$ as follows. We perform at most $r$ iterations. In each iteration $i$, we find the minimum cost element that has marginal value at least $w_{p, i}$ on top of the current solution, and we integrally select this item. (Note that this is similar to the $\OPT_1$ stage, except that we select items integrally.) At the end of the stage, if the items selected in this phase have total marginal gain at least $\eps (1 - 12 \eps) W_p$, then we end phase $p$ and proceed to the next phase. Otherwise, the algorithm proceeds to the Density Greedy stage. 

\smallskip
{\bf The Density Greedy stage of phase $p$.}
If the previous stage did not reach a total marginal gain of at least $\eps (1 - 12 \eps) W_p$, we run the discrete Density Greedy algorithm until we reach a gain of $\eps (1 - 12 \eps) W_p$. Before running Density Greedy, we remove from consideration all elements whose marginal value is at least $\eps W_p/r$. In every step, the Density Greedy algorithm fully selects the item with largest density, i.e., ratio of marginal value to cost.

In order to achieve nearly linear time, we implement the Density Greedy algorithm using approximate lazy evaluations as shown in Algorithm~\ref{alg:lazy-density-greedy}. We maintain the items in a priority queue sorted by density. We initialize the marginal values and the densities with respect to the initial solution. In each iteration of the algorithm, we find an item whose density with respect to the current solution is within a factor of $(1 - \eps)$ of the maximum density as follows. We remove the item at the top of the queue. The marginal value of the item may be stale, so we evaluate its marginal gain with respect to the current solution. If the new marginal gain is within a factor of $(1 - \eps)$ of the old marginal gain, it follows from submodularity that the density of the item is within a factor of $(1 - \eps)$ of the maximum density, and we select the item. If the marginal gain has changed by a factor larger than $(1 - \eps)$, we update the density and reinsert the item in the queue. We also keep track of how many times each item's density has been updated and, if an item has been updated more than $2\ln(n/\eps) / \eps$ times, we discard the item since it can no longer contribute a significant value to the solution.

\smallskip
{\bf Rounding the fractional solution.}
After $1/\eps$ phases, we obtain a fractional solution with $O(1/\eps^4)$ fractional entries. We round the resulting fractional solution to an integral solution using swap rounding, as shown in Algorithm~\ref{alg:round}.


\begin{algorithm}
\begin{algorithmic}[1]
\State $t \gets 1/\eps^3$
\State $r \gets 1/\eps$
\State $x_0 \gets 0$ 
\For{$p = 1, 2, \dots, 1 / \eps$}
  \State $y^{(p, 0)} \gets x_{p - 1}$
  \State $A_p \gets \emptyset$
  \For{$i = 1, 2, \dots, t$}
    \State $a_{p, i} \gets \text{ element with minimum size $c_e$ in } \{e \notin A_p \colon F(y^{(p, i - 1)} \vee \mathbf{1}_{e}) - F(y^{(p, i - 1)}) \geq v_{p, i} \}$
    \State $y^{(p, i)} \gets y^{(p, i-1)} + \eps \mathbf{1}_{a_{p, i}}$
    \State $A_p \gets A_p \cup \{a_{p, i}\}$
  \EndFor

  \If{$W_p = 0$}
    \State Continue to the next phase $p+1$
  \EndIf
  
  \State $z^{(p, 0)} \gets y^{(p, t)}$
  \State $B_p \gets \emptyset$
  \State Let $r_p$ be the smallest $i \in \{0, 1, \ldots, r\}$ such that $w_{p, i + 1} \le \eps (1 - \eps) W_p / r$. If no such $i$ exists, let $r_p = r$.
  \Comment{$r_p$ is the number of large value elements in $\OPT_2$}

  \For{$i = 1, 2, \ldots, r_p$}
    \State $b_{p, i} \gets \text{ element with minimum size $c_e$ in } \{e \colon F(z^{(p, i - 1)} \vee \mathbf{1}_{e}) - F(z^{(p, i - 1)}) \geq w_{p, i} \}$
    \State $z^{(p, i)} \gets z^{(p, i-1)} \vee \mathbf{1}_{b_{p, i}}$
    \State $B_p \gets B_p \cup \{b_{p, i}\}$
    \If{$F(z^{(p, i)}) - F(z^{(p,0)}) \ge \eps(1-12\eps)W_p$}
      \State Set $x_p \gets z^{(p,i)}$ and continue to phase $p+1$ 
    \EndIf
  \EndFor
  
  \If{$F(z^{(p, r_p)}) - F(z^{(p, 0)}) < \eps(1-12\eps)W_p$}
    \State $V' \gets V \setminus \{e \colon F(z^{(p, r_p)} \vee \mathbf{1}_{e}) - F(z^{(p, r_p)}) \ge \eps W_p/r \}$
    \State $C_p \gets \Call{DensityGreedy}{f, z^{(p, r_p)}, \eps(1-12\eps)W_p - F(z^{(p, r_p)}) + F(z^{(p, 0)}), V'}$
    \State $x_p \gets z^{(p, r_p)} \vee \mathbf{1}_{C_p}$
  \EndIf
\EndFor
  \State Return $x_{1/\eps}$
\end{algorithmic}
\caption{$\textproc{KnapsackGuess}(f, \epsilon, \{v_{p, i}\}, \{W_p\}, \{w_{p, i}\})$}
\label{alg:monotone-knapsack-guess}
\end{algorithm}

\begin{algorithm}[t]
\begin{algorithmic}[1]
\State $S_0 \gets \emptyset$
\State $D \gets \emptyset$
\State $u(e) \gets 0$ for all $e \in V'$
\State $v(e) \gets F(x \vee \mathbf{1}_e) - F(x)$ for all $e \in V'$
\State Maintain the elements in a priority queue sorted in decreasing order by key, where the key of each element $e$ is initialized to its density ${v(e) \over c(e)}$ 
\For{$i = 1, 2, \dots$}
  \While{true}
    \If{queue is empty}
      \State \Return $S_{i-1}$
    \EndIf

    \State Remove the element $e$ from the priority queue with maximum key
    \State $v'(e) \gets F(x \vee \mathbf{1}_{S_{i - 1} \cup \{e\}}) - F(x \vee \mathbf{1}_{S_{i - 1}})$
    \State $u(e) \gets u(e) + 1$

    \If{$v(e) \geq (1 - \eps) v'(e)$}
      \State $e_i \gets e$
      \State $v(e) \gets v'(e)$
      
      \State $S_i \gets S_{i - 1} \cup \{e_i\}$
      \If{$f(x\vee 1_{S_i}) - f(x) \ge W$}
        \State \Return $S_i$
      \EndIf
      \State Exit the while loop and continue to iteration $i + 1$
    \Else
        \If{$u(e) \leq {2\ln(n/\eps) \over \eps}$}
          \State $v(e) \gets v'(e)$
          \State Reinsert $e$ into the queue with key ${v'(e) \over c(e)}$
        \Else
          \State $D \gets D \cup \{e\}$
        \EndIf
    \EndIf
  \EndWhile
\EndFor
\end{algorithmic}
\caption{$\textproc{LazyDensityGreedy}(f, x, W, V')$}
\label{alg:lazy-density-greedy}
\end{algorithm}

\clearpage


\section{Analysis of the running time}

Since all the fractional solutions considered have only $O(t/\eps)$ coordinates that strictly fractional, we can compute all marginal values exactly. Each evaluation of the multilinear extension takes $2^{O(t/\eps)} = 2^{O(1/\eps^4)}$ queries to the value oracle of $f$. 

Consider a single run of \textproc{KnapsackGuess}.The $\OPT_1$ stage of a phase (lines~{7}--{11}) computes $O(n t) = O(n/\eps^3)$ marginal values. The $\OPT_2$ guessing stage of a phase (lines~{18}--{25}) computes $O(n r) = O(n/\eps)$ marginal values. The filtering on line~{27} computes $O(n)$ marginal values. In the \textproc{LazyDensityGreedy} algorithm, the total number of queue operations is $O(n \log(n/\eps))$. Therefore \textproc{LazyDensityGreedy} computes $O(n \log(n/\eps))$ marginal values and $O(n \log(n/\eps) \log{n})$ additional time (removing an element from the queue takes $O(\log{n})$ time). Thus a phase of \textproc{KnapsackGuess} uses $2^{O(1/\eps^4)} n (\log(n / \eps) + {1 / \eps^3}) = 2^{O(1/\eps^4)} n \log{n}$ function evaluations and spends $O(n \log^2{n} \log(1/\eps))$ additional time. Since there are $1/\eps$ phases, \textproc{KnapsackGuess} uses ${1 \over \eps} \cdot 2^{O(1/\eps^4)} n \log{n} = 2^{O(1/\eps^4)} n \log{n}$ function evaluations and $O(n/\eps \log^2{n} \log(1/\eps))$ additional time.

The rounding algorithm \textproc{Round} uses $O(n + \log(1/\eps) / \eps^4)$ time and does not evaluate $f$.

Therefore, for each choice of the guessed values, \textproc{Knapsack} evaluates $f$ $2^{O(1/\eps^4)} n \log{n}$ times and spends $O(n/\eps \log^2{n} \log(1/\eps) + \log(1/\eps) /\eps^4)$ additional time. As discussed in Section~\ref{sec:knapsack-algo}, there are $(1/\eps)^{O(1/\eps^4)}$ possible choices for the guessed values. Therefore overall the algorithm uses $(1/\eps)^{O(1/\eps^4)} n \log{n}$ evaluation queries and $(1/\eps)^{O(1/\eps^4)} n \log^2{n}$ additional time.

\section{Analysis of the fractional solution}
\label{sec:fractional}
In this section, we prove the following theorem. In Section~\ref{sec:rounding}, we will use the second guarantee in the theorem statement in order to round the fractional solution without any loss.

\begin{theorem}
\label{thm:monotone-fractional}
  There are choices for the guessed values $\{v_{p, i}\}$, $\{W_p\}$, and $\{w_{p, i}\}$ for which Algorithm~\ref{alg:monotone-knapsack-guess} returns a fractional solution $x$ with the following properties:
  \begin{enumerate}[$(1)$]
    \item $F(x) \geq \left(1 - {1 \over e} - O(\eps) \right) f(\OPT)$;
    \item Let $E$ be the set of all items $e \in V$ such that $0 < x_e < 1$. There exists a mapping $\sigma: E \times \{1, 2, \dots, 1/\eps\} \rightarrow \OPT_1$ with the following properties:
    \begin{enumerate}[$(a)$]
      \item For every element $e \in E$ and every phase $p \in \{1, 2, \dots, 1/\eps\}$ such that $e \in A_p$, $\sigma(e, p)$ is defined and $c(e) \leq c(\sigma(e, p))$.
      \item For every element $o \in \OPT_1$, there are at most $1/\eps$ pairs $(e, p)$ such that $\sigma(e, p) = o$. 
    \end{enumerate}
  \end{enumerate} 
\end{theorem}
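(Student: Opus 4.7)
The plan is to set the guessed values phase by phase to mirror the true marginal contributions of $\OPT$ against the algorithm's iterate, then execute a continuous Greedy--style per-phase progress analysis to obtain the $1 - 1/e - O(\eps)$ bound, and finally read the mapping of part (2) directly off the $\OPT_1$ stages. Concretely, I would define the target guesses inductively in $p$: take $v_{p,i}$ to be the largest multiple of $\eps M/t$ at most $F(y^{(p,i-1)} \vee \mathbf{1}_{o_i}) - F(y^{(p,i-1)})$; $W_p$ the largest multiple of $\eps M$ at most $F(x_{p-1} \vee \mathbf{1}_{\OPT_2}) - F(x_{p-1})$; and $w_{p,i}$ the largest multiple of $\eps^2 W_p/r$ at most the marginal value of the $i$-th largest-marginal element of $\OPT_2$ with respect to $z^{(p,i-1)}$. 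Since Algorithm~\ref{alg:monotone} enumerates every such discretized sequence, one outer iteration uses the target sequence, with cumulative discretization slack at most $O(\eps)\,f(\OPT)$.

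For part (1), I would aim for the per-phase progress inequality
\begin{equation*}
F(x_p) - F(x_{p-1}) \;\geq\; \eps\bigl(1 - O(\eps)\bigr)\bigl(f(\OPT) - F(x_{p-1})\bigr),
\end{equation*}
which iterated over the $1/\eps$ phases yields $F(x_{1/\eps}) \geq (1 - 1/e - O(\eps))\,f(\OPT)$. To prove it I would decompose $f(\OPT) - F(x_{p-1})$ via monotonicity and submodularity into an $\OPT_1$ part $\sum_{o \in \OPT_1}(F(x_{p-1} \vee \mathbf{1}_o) - F(x_{p-1}))$ and an $\OPT_2$ part $F(x_{p-1} \vee \mathbf{1}_{\OPT_2}) - F(x_{p-1})$. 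The $\OPT_1$ part is absorbed by the fractional updates: $F(y^{(p,i)}) - F(y^{(p,i-1)}) \geq \eps(F(y^{(p,i-1)} \vee \mathbf{1}_{a_{p,i}}) - F(y^{(p,i-1)})) \geq \eps\,v_{p,i}$, using linearity of $F$ in each coordinate and submodularity, and then applying the correctness of $v_{p,i}$ together with submodularity to lower bound $v_{p,i}$ by $F(x_{p-1} \vee \mathbf{1}_{o_i}) - F(x_{p-1})$ up to the discretization slack. For the $\OPT_2$ part, with the correct $W_p$ and $w_{p,i}$'s, either the large-value iterations terminate with gain at least $\eps(1 - 12\eps) W_p$, or Density Greedy does; the filter that discards items of marginal value at least $\eps W_p/r$ together with the small-cost bound $c_o \leq \eps^2(1 - c(\OPT_1))$ for $o \in \OPT_2$ will ensure Density Greedy stays within the residual budget.

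For part (2), I would define $\sigma(e, p) = o_i$ whenever $e \in A_p$ and $e = a_{p,i}$ is the element selected in iteration $i$ of the $\OPT_1$ stage of phase $p$. Property (a) holds because, under the correct guess, $o_i$ lies in the candidate set $\{e \notin A_p : F(y^{(p,i-1)} \vee \mathbf{1}_e) - F(y^{(p,i-1)}) \geq v_{p,i}\}$, so the minimum-cost element $a_{p,i}$ of that set satisfies $c(a_{p,i}) \leq c(o_i)$; the corner case $o_i \in A_p$ (when $o_i$ is itself among the elements already selected earlier in the same stage) will be handled by a Hall-type reassignment using the greedy ordering of $\OPT$ to supply a substitute target of no larger cost. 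Property (b) is immediate, since each phase contributes at most one pair $(e, p)$ to any given $o_i \in \OPT_1$ and there are only $1/\eps$ phases.

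The hardest step will be the $\OPT_2$ stage analysis: I must simultaneously control the discretization error of the $w_{p,i}$'s, certify that the marginal-value filter discards only items whose contribution is already captured by the large-value iterations, and bound the total cost of Density Greedy's additions via the small-cost property so as not to overspend the residual budget $1 - c(\OPT_1)$. Aligning the constants $\eps(1 - 12\eps) W_p$, $\eps W_p/r$, and the $1/\eps$ bound on the number of large-marginal items per phase is where the subtlety lies; once the guesses are fixed, the rest reduces to standard continuous Greedy bookkeeping.
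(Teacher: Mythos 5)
Your overall architecture matches the paper's (inductively defined target guesses, a per-phase progress inequality iterated $1/\eps$ times, and a mapping read off the $\OPT_1$ stages), but there are two genuine gaps, and the first is exactly the point where the paper has to do real work. You define $v_{p,i}$ from the \emph{fixed} element $o_i$ of the Greedy ordering. The problem you flag as a ``corner case'' --- $o_i$ already selected earlier in the stage --- is not a bookkeeping issue to be patched afterwards by a Hall-type reassignment: it breaks the value analysis itself. If the algorithm picked $a_{p,j} = o_i$ at some $j < i$, then at iteration $i$ the candidate set $\{e \notin A_p : F(y^{(p,i-1)} \vee \mathbf{1}_e) - F(y^{(p,i-1)}) \geq v_{p,i}\}$ may be empty, and the guarantee $F(y^{(p,i)}) - F(y^{(p,i-1)}) \geq \eps v_{p,i}$ has no witness. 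Since the guesses are what the algorithm runs on, no after-the-fact reassignment can repair a guess sequence that was mis-targeted. The paper's fix is to define the targets \emph{adaptively}: it maintains a permutation $o'_1,\dots,o'_t$ of $\OPT_1$ where $o'_i := a_{p,i}$ whenever $a_{p,i}$ happens to be an unused element of $\OPT_1$, and otherwise $o'_i := \tilde o_i$, the best \emph{remaining} element of $\OPT_1$; the invariant $\{a_{p,1},\dots,a_{p,j}\}\cap\OPT_1 \subseteq \{o'_1,\dots,o'_j\}$ then guarantees $\tilde o_i$ is always an available candidate. This adaptive permutation simultaneously delivers the value bound and the sorted-cost domination used for part (2). (Separately, your claim that you can lower bound $v_{p,i}$ by $F(x_{p-1}\vee\mathbf{1}_{o_i}) - F(x_{p-1})$ has the inequality backwards --- submodularity gives an upper bound in that direction; the correct telescoping bounds the \emph{sum} below by the marginal of $\OPT_1$ on top of the end-of-stage point, and the cross-term with $\OPT_2$ is then handled by a lattice submodularity step as in the paper's Lemma~\ref{claim7}.)

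The second gap is in the $\OPT_2$ stage, which you defer. Defining $w_{p,i}$ from the $i$-th largest-marginal element of $\OPT_2$ is not enough: the paper restricts the targets $o'_i$ to the subset $\mathrm{SO}_i$ of elements whose \emph{density} is at least $(1-5\eps)$ times the average density of $\OPT_2$. This restriction is what yields the bound $c(O_i) \leq \eps(1-4\eps)\,c(\OPT_2)$ (hence $c(B_p) \leq \eps(1-c(\OPT_1))$, the budget control you correctly identify as the crux), and it is also what makes property (3) of Lemma~\ref{lem1forOPT2} true --- every unpicked element of $\OPT_2$ either has small marginal value or small density --- which certifies that the filter on line~27 and the Density Greedy stopping rule do not lose the value of $\OPT_2 \setminus O$. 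Without the density restriction in the definition of the targets, the large-value iterations could select expensive, low-density elements of $\OPT_2$ and the cost accounting fails. So while your plan points at the right ingredients, both the adaptive targeting in the $\OPT_1$ stage and the density-restricted targeting in the $\OPT_2$ stage are missing ideas rather than deferred details.
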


In the following, we fix a phase $p$ of the algorithm, and we analyze each of the stages of the phase. In the following lemma, we analyze the $\OPT_1$ stage of phase $p$ (lines 5--11 of \textproc{KnapsackGuess}).

\begin{lemma}
\label{lem1}
  Consider phase $p$ of the algorithm. There exist choices for the guessed values $\{v_{p, i}\}$ for which we have
\begin{enumerate}[$(1)$]
\item $F(y^{(p, t)}) - F(y^{(p, 0)}) \geq \eps (F(y^{(p, t)} \vee \mathbf{1}_{\OPT_1}) - F(y^{(p, t)})) - \eps^2 M$, 
\item $\mathrm{sorted}(c_{a_{p,1}}, c_{a_{p,2}}, \ldots, c_{a_{p,t}}) \le \mathrm{sorted}(c_{o_1}, c_{o_2}, \dots, c_{o_t})$, where $\{o_1, o_2, \dots, o_t\} = \OPT_1$.
\end{enumerate}
\end{lemma}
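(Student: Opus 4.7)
My plan is to choose the guesses $v_{p,i}$ to be the largest multiple of $\eps M/t$ not exceeding $F(y^{(p,i-1)}\vee \mathbf{1}_{o_i}) - F(y^{(p,i-1)})$, capped at $M$ (the cap introduces at most an $O(\eps M)$ loss that gets absorbed into the $\eps^2 M$ slack, since $f(\OPT)\le M/(1-\eps)$). Because $y^{(p,i-1)}$ is a function of $v_{p,1},\ldots,v_{p,i-1}$, this defines the sequence inductively, and since Algorithm~\ref{alg:monotone} enumerates every such sequence of grid values, this specific choice is among the guesses tried.

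To prove (1) I would telescope the iteration gains. Since the multilinear extension is affine in each coordinate, $F(y^{(p,i)})-F(y^{(p,i-1)}) = \eps\cdot(\partial F/\partial y_{a_{p,i}})(y^{(p,i-1)}) \ge \eps(F(y^{(p,i-1)}\vee\mathbf{1}_{a_{p,i}}) - F(y^{(p,i-1)})) \ge \eps v_{p,i}$, where the last inequality uses the validity of $a_{p,i}$. Summing gives $F(y^{(p,t)})-F(y^{(p,0)}) \ge \eps\sum_i v_{p,i}$. By my choice of $v_{p,i}$ and monotonically-decreasing marginals (submodularity applied with $y^{(p,i-1)}\le y^{(p,t)}$), $v_{p,i} \ge F(y^{(p,t)}\vee\mathbf{1}_{o_i}) - F(y^{(p,t)}) - \eps M/t$. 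Summing the $t$ bounds and applying the subadditive-marginals inequality $F(y\vee\mathbf{1}_{\OPT_1}) - F(y) \le \sum_{o\in\OPT_1}(F(y\vee\mathbf{1}_o) - F(y))$ yields (1) after multiplying by $\eps$.

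To prove (2) I would exhibit a bijection $\pi\colon\{1,\ldots,t\}\to\{1,\ldots,t\}$ with $c_{a_{p,i}}\le c_{o_{\pi(i)}}$ for each $i$, which is equivalent to the claimed sorted inequality. The first stage sets $\pi(i)=j$ whenever $a_{p,i}=o_j\in\OPT_1$, giving equality of costs. The second stage matches the remaining iterations (those with $a_{p,i}\notin\OPT_1$) to the remaining $\OPT_1$ indices (those $j$ with $o_j$ never selected), which have matching cardinalities. The key observation driving the second stage is that for every never-selected $o_j$, at iteration $j$ the element $o_j$ was available and satisfied the threshold $v_{p,j}$ by construction, so the min-cost rule forces $c_{a_{p,j}}\le c_{o_j}$; in particular iteration $j$ itself is a valid partner for index $j$. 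I would verify Hall's condition for the second-stage bipartite graph using the threshold form of the sorted inequality.

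The main obstacle is handling ``collisions'' in the second stage: the iteration $j$ corresponding to a never-selected $o_j$ may itself lie in $T_1$, having picked some other $\OPT_1$ element $o_{j'}$ with $c_{o_{j'}}\le c_{o_j}$, in which case iteration $j$ is already used by the first stage. I would resolve these via augmenting-path exchanges: if $j>j'$, then $o_{j'}$ was selected at iteration $j$ rather than at its own index $j'$, so iteration $j'$ still had $o_{j'}$ available and therefore $c_{a_{p,j'}}\le c_{o_{j'}}\le c_{o_j}$, giving an alternate slot that can be swapped in; a cascading chain of such swaps handles the general case. Once the bijection is constructed, both parts of the lemma follow directly.
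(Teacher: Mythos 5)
Your choice of guesses has a genuine gap: you peg $v_{p,i}$ to the marginal value of the \emph{fixed} element $o_i$ from the Greedy ordering, but the algorithm's min-cost rule may select a different element of $\OPT_1$ out of turn, after which your thresholds no longer certify anything. Concretely, take $t=2$ and suppose at iteration $1$ that $o_2$ is cheaper than $o_1$ and also clears the threshold $v_{p,1}$ (which is pegged to $o_1$'s marginal); then $a_{p,1}=o_2$. At iteration $2$ your threshold $v_{p,2}$ is pegged to $o_2$'s (possibly large) marginal on top of $y^{(p,1)}$, but $o_2\in A_p$ is no longer eligible, and $o_1$'s marginal may fall below $v_{p,2}$. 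The candidate set at iteration $2$ may then be empty (so $a_{p,2}$ is undefined and your telescoping step $F(y^{(p,i)})-F(y^{(p,i-1)})\ge \eps v_{p,i}$ has nothing to telescope), or it may contain only elements far more expensive than anything in $\OPT_1$, in which case $\mathrm{sorted}(c_{a_{p,1}},c_{a_{p,2}})\le\mathrm{sorted}(c_{o_1},c_{o_2})$ is simply \emph{false} for your sequence $\{v_{p,i}\}$. No matching or augmenting-path argument can repair this, because the statement you are trying to certify fails for the guesses you committed to; your explicit swap only treats the case $j>j'$, and the problematic case is exactly $j<j'$, where $o_{j'}$ has been consumed before its own iteration.

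The fix is to make the target element adaptive, which is what the paper does: it maintains a permutation $o'_1,\dots,o'_t$ of $\OPT_1$ built alongside the guesses, sets $\tilde o_i$ to be the maximum-marginal element of $\OPT_1\setminus\{o'_1,\dots,o'_{i-1}\}$, pegs $v_{p,i}$ to $\tilde o_i$, and then defines $o'_i=a_{p,i}$ if the algorithm happened to grab a fresh $\OPT_1$ element and $o'_i=\tilde o_i$ otherwise. The invariant $\{a_{p,1},\dots,a_{p,j}\}\cap\OPT_1\subseteq\{o'_1,\dots,o'_j\}$ then guarantees that $\tilde o_i$ is always an eligible candidate at iteration $i$, so $c_{a_{p,i}}\le c_{o'_i}$ holds pointwise against a permutation of $\OPT_1$ (giving property $(2)$ with no Hall's-condition bookkeeping), and each iteration's gain is at least $\eps$ times the marginal of $o'_i$ minus the discretization error (giving property $(1)$ by the same telescoping you describe).
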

\begin{proof}
  We recursively define the values $v_{p, 1}, v_{p, 2}, \dots, v_{p, t}$, and a permutation $o'_1, o'_2, \dots, o'_t$ of the elements in $\OPT_1$ as follows. Suppose we have already defined $v_{p, 1}, \dots, v_{p, i - 1}$ and $o'_1, \dots, o'_{i - 1}$.
Let
  \[ \tilde{o}_i = \argmax_{o\in \OPT_1\setminus \{o'_1, \ldots, o'_{i-1}\}} (F(y^{(p,i-1)} \vee \mathbf{1}_{o}) - F(y^{(p,i-1)})).\]
We define
\begin{align*}
    v_{p,i} &= (\eps M/t) \left\lfloor\frac{F(y^{(p,i-1)} \vee \mathbf{1}_{\tilde{o}_i}) - F(y^{(p,i-1)})}{\eps M/t} \right\rfloor\\
    o'_{i} &=
    \begin{cases}
      a_{p, i} &\quad \text{ if } a_{p, i} \in \OPT_1 \setminus \{o'_1, \ldots, o'_{i - 1}\}\\
      \tilde{o}_i &\quad \text{ otherwise}
    \end{cases}
\end{align*}
In the definition of $o'_i$ above, the element $a_{p, i}$ is the one chosen on line 8 of \textproc{KnapsackGuess} based on the value $v_{p, i}$ defined above.

Let us now verify that these values $v_{p, i}$ satisfy the properties in the statement of the lemma. We first show the second property. We can show that $o'_i$ is a candidate for $a_{p, i}$ as follows. This is trivially true if $o'_i = a_{p, i}$ and thus we may assume that $o'_i = \tilde{o}_i$. Since the marginal value of $\tilde{o}_i$ is at least $v_{p, i}$, it suffices to show that $\tilde{o}_i \notin \{a_{p, 1}, \dots, a_{p, i - 1}\}$. It is straightforward to verify by induction that, for all $j$, $\{a_{p, 1}, \dots, a_{p, j}\} \cap \OPT_1 \subseteq \{o'_1, \dots, o'_j\}$. Since $\tilde{o}_i \in \OPT_1 \setminus \{o'_1, \dots, o'_{i - 1}\}$, it follows that $\tilde{o}_i \notin \{a_{p, 1}, \dots, a_{p, i - 1}\}$. Therefore $o'_i$ is a candidate for $a_{p, i}$ and thus $c_{a_{p, i}} \leq c_{o'_i}$ for all $i$. Since the elements $o'_1, \dots, o'_t$ are a permutation of $\OPT_1$, we have $\mathrm{sorted}(c_{a_{p, 1}}, \dots, c_{a_{p, t}}) \leq \mathrm{sorted}(c_{o_1}, \dots, c_{o_t})$.

We now show the first property. For each $i$, we have
\begin{align*}
  F(y^{(p, i)}) - F(y^{(p, i-1)}) &= F(y^{(p, i - 1)} + \eps \mathbf{1}_{e_{p, i}}) - F(y^{(p, i - 1)})\\
  &= \eps (F(y^{(p, i - 1)} \vee \mathbf{1}_{e_{p, i}}) - F(y^{(p, i - 1)}))\\
  &\geq \eps v_{p, i}\\
  &= \eps {\eps M \over t} \left\lfloor {F(y^{(p, i - 1)} \vee \mathbf{1}_{\tilde{o}_i}) - F(y^{(p, i - 1)}) \over \eps M / t}\right\rfloor\\
  &\geq \eps {\eps M \over t} \left( {F(y^{(p, i - 1)} \vee \mathbf{1}_{\tilde{o}_i}) - F(y^{(p, i - 1)}) \over \eps M / t} - 1\right)\\
  &= \eps (F(y^{(p, i - 1)} \vee \mathbf{1}_{\tilde{o}_i}) - F(y^{(p, i - 1)})) - {\eps^2 M \over t}\\
  &= F(y^{(p, i - 1)} + \eps \mathbf{1}_{\tilde{o}_i}) - F(y^{(p, i - 1)}) - {\eps^2 M \over t}.
\end{align*}
Since $o'_i \in \{a_{p, i}, \tilde{o}_i\}$, it follows that
\begin{align*}
 F(y^{(p, i)}) - F(y^{(p, i - 1)}) &\geq F(y^{(p, i - 1)} + \eps \mathbf{1}_{o'_i}) - F(y^{(p, i - 1)}) - {\eps^2 M \over t}\\
  &= \eps (F(y^{(p, i - 1)} \vee \mathbf{1}_{o'_i}) - F(y^{(p, i - 1)})) - {\eps^2 M \over t}.
\end{align*}
By summing up all these inequalities and using submodularity, we obtain
\begin{align*}
  F(y^{(p, t)}) - F(y^{(p, 0)}) &\geq \eps \sum_{i = 1}^t (F(y^{(p, i - 1)} \vee \mathbf{1}_{o'_i}) - F(y^{(p, i - 1)})) - {\eps^2 M}\\
  &\geq \eps (F(y^{(p, t)} \vee \mathbf{1}_{\OPT_1}) - F(y^{(p, t)})) - \eps^2 M.
\end{align*}
\end{proof}

In the following lemma, we analyze the stage of phase $p$ for large value elements of $\OPT_2$ (lines 15--25 of \textproc{KnapsackGuess}). The proof is similar to the proof of Lemma~\ref{lem1}.

\begin{lemma}
\label{lem1forOPT2}
  There exist choices for the guessed values $W_p$ and $\{w_{p, i}\}$ for which we have 
\begin{enumerate}[$(1)$]
\item There is a subset $O =\{o'_1, \ldots, o'_{r_p}\} \subseteq \OPT_2$ such that $c_{b_{p, i}} \le c_{o'_i}$ for all $1 \leq i \leq r_p$.
\item Consider $i \in\{1, \ldots, r_p\}$. Let $O_i = \{o'_1,\ldots, o'_i\}$. We have
  \[ F(z^{(p, i)}) - F(z^{(p, 0)}) \geq F(z^{(p, i)} \vee \mathbf{1}_{O_i}) - F(z^{(p, i)}) - \eps^2 W_p,\]
  and
  \[F(z^{(p, i)}) - F(z^{(p, 0)}) \geq \frac{(1-5\eps)c(O_i)(F(z^{(p,i)}\vee \mathbf{1}_{\OPT_2}) - F(z^{(p,i)}))}{c(\OPT_2)} - \eps^2 W_p.\]
Furthermore, if the phase does not end after iteration $i$ (line 23 of \textproc{KnapsackGuess}) then
$$F(z^{(p, i)}) - F(z^{(p, 0)}) \ge \frac{(1-6\eps) c(O_i)}{c(\OPT_2)}W_p - \eps^2 W_p$$
$$c(O_i) \le \eps(1-4\eps) c(\OPT_2)$$
\item Consider $i \in\{1, \ldots, r_p\}$. Let $O_i = \{o'_1,\ldots, o'_i\}$. For every $o \in \OPT_2\setminus O_i$, at least one of the following conditions holds:
\begin{itemize}
\item $F(z^{(p, i)} \vee \mathbf{1}_{o}) - F(z^{(p, i)}) \leq w_{p, i + 1} + \eps^2 W_p/r$
\item ${F(z^{(p, i)} \vee \mathbf{1}_{o}) - F(z^{(p, i)}) \over c_o} < {(1-5\eps) (F(z^{(p, i)} \vee \mathbf{1}_{\OPT_2}) - F(z^{(p, i)})) \over c(\OPT_2)}$
\end{itemize}
\end{enumerate}
\end{lemma}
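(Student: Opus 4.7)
The argument parallels Lemma~\ref{lem1}, but because parts~(2) and~(3) simultaneously constrain marginal values and densities of elements in $\OPT_2$, the witness elements must be chosen more carefully. I would take $W_p$ to be the largest multiple of $\eps M$ not exceeding $F(z^{(p,0)} \vee \mathbf{1}_{\OPT_2}) - F(z^{(p,0)})$ (capped at $M$), and define an auxiliary sequence $o'_1, o'_2, \ldots$ of elements of $\OPT_2$ recursively. Having chosen $o'_1, \ldots, o'_{i-1}$, let
\[
D_i = \frac{F(z^{(p,i-1)} \vee \mathbf{1}_{\OPT_2}) - F(z^{(p,i-1)})}{c(\OPT_2)}
\]
be the average density of $\OPT_2$ at that step, and set
\[
\tilde{o}_i = \argmax \Bigl\{ F(z^{(p,i-1)} \vee \mathbf{1}_o) - F(z^{(p,i-1)}) : o \in \OPT_2 \setminus \{o'_1,\ldots,o'_{i-1}\},\ \tfrac{F(z^{(p,i-1)} \vee \mathbf{1}_o) - F(z^{(p,i-1)})}{c_o} \ge (1-5\eps) D_i \Bigr\}.
\]
Let $w_{p,i}$ be this marginal value rounded down to a multiple of $\eps^2 W_p/r$, and set $o'_i = b_{p,i}$ if $b_{p,i} \in \OPT_2 \setminus \{o'_1,\ldots,o'_{i-1}\}$ and $o'_i = \tilde{o}_i$ otherwise. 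The recursion halts (fixing $r_p$) once $\tilde{o}_i$ fails to exist or $w_{p,i}$ drops to $\le \eps(1-\eps)W_p/r$, matching the algorithm's definition of $r_p$.

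\textbf{Parts~(1) and the first inequality of~(2).} These follow in direct analogy with Lemma~\ref{lem1}. An easy induction shows $\tilde{o}_i$ is always a valid candidate for $b_{p,i}$, so $c_{b_{p,i}} \le c_{\tilde{o}_i}$, and in particular $c_{b_{p,i}} \le c_{o'_i}$ since $o'_i \in \{b_{p,i}, \tilde{o}_i\}$. Summing the per-iteration bound
\[
F(z^{(p,j)}) - F(z^{(p,j-1)}) \ge F(z^{(p,j-1)} \vee \mathbf{1}_{o'_j}) - F(z^{(p,j-1)}) - \tfrac{\eps^2 W_p}{r}
\]
over $j \le i \le r$ (total discretization error $\le \eps^2 W_p$) and invoking submodularity to pass from the singletons $o'_j$ to the set $O_i$ yields the first bound of~(2).

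\textbf{Part~(2b) and part~(3).} The new ingredient is that, in each iteration $j$, the per-step gain satisfies $F(z^{(p,j)}) - F(z^{(p,j-1)}) \ge w_{p,j}$, and $w_{p,j}$ is within $\eps^2 W_p/r$ of the marginal value of $\tilde{o}_j$, which by density-feasibility is at least $(1-5\eps) D_j c_{\tilde{o}_j}$. Since $c_{o'_j} \le c_{\tilde{o}_j}$ in both branches of the definition of $o'_j$, this gives $F(z^{(p,j)}) - F(z^{(p,j-1)}) \ge (1-5\eps) D_j c_{o'_j} - \eps^2 W_p/r$. Summing over $j$ and using submodularity to replace $D_j$ by the density of $\OPT_2$ on top of the larger state $z^{(p,i)}$ gives the second bound of~(2). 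Part~(3) is then immediate from the construction: any $o \in \OPT_2 \setminus O_i$ either fails the density test (second alternative), or is feasible in the maximization defining $\tilde{o}_{i+1}$, so its marginal on top of $z^{(p,i)}$ is at most that of $\tilde{o}_{i+1}$, hence at most $w_{p,i+1} + \eps^2 W_p/r$.

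\textbf{Additional bounds when the phase does not end.} Since $W_p$ was chosen as a floor, submodularity and monotonicity give $F(z^{(p,i)} \vee \mathbf{1}_{\OPT_2}) - F(z^{(p,i)}) \ge W_p - (F(z^{(p,i)}) - F(z^{(p,0)}))$. Plugging this into the density bound of~(2) and solving the resulting linear inequality in $F(z^{(p,i)}) - F(z^{(p,0)})$ yields the claim with the slightly weaker factor $(1-6\eps)$; the cost bound $c(O_i) \le \eps(1-4\eps) c(\OPT_2)$ follows by combining this with the hypothesis $F(z^{(p,i)}) - F(z^{(p,0)}) < \eps(1-12\eps) W_p$ and choosing constants. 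The main obstacle is the density inequality in~(2b): forcing $b_{p,j}$ to inherit a near-optimal density even though $\tilde{o}_j$ is selected to maximize marginal value (not density) and the algorithm only filters by the marginal threshold $w_{p,j}$, which is precisely what dictates defining $\tilde{o}_j$ as an argmax over a density-feasible subset rather than over all of $\OPT_2 \setminus \{o'_1, \ldots, o'_{i-1}\}$ as in Lemma~\ref{lem1}.
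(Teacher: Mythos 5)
Your proposal is correct and follows essentially the same route as the paper's proof: the same floored $W_p$, the same density-feasible candidate set, and the same construction of $\tilde{o}_i$, $w_{p,i}$, and $o'_i \in \{b_{p,i}, \tilde{o}_i\}$. The only (harmless) deviation is that you set $o'_i = b_{p,i}$ whenever $b_{p,i}$ is a fresh element of $\OPT_2$, rather than only when it also passes the density test as in the paper, and you correctly compensate by routing the density bound in part (2) through $c_{o'_i} \le c_{\tilde{o}_i}$ instead of through $o'_i \in \mathrm{SO}_i$.
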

\begin{proof}
We define
  \[ W_p = (\eps M) \left\lfloor {F(z^{(p, 0)} \vee \mathbf{1}_{\OPT_2}) - F(z^{(p, 0)}) \over \eps M} \right\rfloor.\]
We define the values $w_{p, 1}, \ldots, w_{p, r_p}$ and a sequence of distinct elements $o'_1, \ldots, o'_{r_p}$ in $\OPT_2$ recursively as follows. Suppose we have already defined the values $w_{p, 1}, \ldots, w_{p, i - 1}$ and the elements $o'_1, \ldots, o'_{i - 1}$. Let $\mathrm{SO}_i$ be the set of elements $o \in \OPT_2 \setminus \{o'_1, \ldots, o'_{i - 1}\}$ that satisfy
  \[ {F(z^{(p,i-1)} \vee \mathbf{1}_{o}) - F(z^{(p,i-1)}) \over c_o} \geq {(1 - 5\eps) (F(z^{(p, i-1)} \vee \mathbf{1}_{\OPT_2}) - F(z^{(p, i-1)})) \over c(\OPT_2)} \]
Let
  \[ \tilde{o}_i = \argmax_{o \in \mathrm{SO}_i} (F(z^{(p, i - 1)} \vee \mathbf{1}_{o}) - F(z^{(p,i-1)})).\]
We define
\begin{align*}
  w_{p,i} &= (\eps^2 W_p/r) \left\lfloor {F(z^{(p,i-1)} \vee \mathbf{1}_{\tilde{o}_i}) - F(z^{(p,i-1)}) \over \eps^2 W_p/r} \right\rfloor\\
  o'_i &= 
  \begin{cases}
    b_{p, i} & \text{ if } b_{p, i} \in \mathrm{SO}_i\\
    \tilde{o}_i & \text{ otherwise}
  \end{cases}
\end{align*}
In the definition of $o'_i$ above, the element $b_{p, i}$ is the one chosen on line 19 of \textproc{KnapsackGuess} based on the values $W_p$ and $w_{p, i}$ defined above.

We now verify that the values $W_p$ and $\{w_{p, i}\}$ satisfy the properties in the statement of the lemma. Let $O = \{o'_1, \ldots, o'_{r_p}\}$ and $O_i = \{o'_1, \ldots, o'_i\}$ for all $1 \leq i \leq r_p$. 

The first property follows from the fact that $o'_i$ is a candidate for $b_{p, i}$.

We next show the third property, which follows from the definition of $w_{p, i + 1}$. If $o \notin \mathrm{SO}_{i + 1}$ then the second condition holds by definition of $\mathrm{SO}_{i + 1}$. Therefore we may assume that $o \in \mathrm{SO}_{i + 1}$. By the definition of $w_{p, i + 1}$ and $\tilde{o}_{i + 1}$, we have
\begin{align*}
  w_{p, i + 1} &= {\eps^2 W_p \over r} \left\lfloor {F(z^{(p,i)} \vee \mathbf{1}_{\tilde{o}_{i + 1}}) - F(z^{(p,i)}) \over \eps^2 W_p/r} \right\rfloor\\
  &\geq {\eps^2 W_p \over r} \left( {F(z^{(p,i)} \vee \mathbf{1}_{\tilde{o}_{i + 1}}) - F(z^{(p,i)}) \over \eps^2 W_p/r} - 1 \right)\\
  &= F(z^{(p,i)} \vee \mathbf{1}_{\tilde{o}_{i + 1}}) - F(z^{(p,i)}) - {\eps^2 W_p \over r}\\ 
  &\geq F(z^{(p,i)} \vee \mathbf{1}_{o}) - F(z^{(p,i)}) - {\eps^2 W_p \over r}. 
\end{align*}
By rearranging the inequality above, we obtain that $o$ satisfies the first condition of property $(3)$.

We now show the second property. For each $i$, we have
\begin{align*}
  F(z^{(p, i)}) - F(z^{(p, i-1)}) &= F(z^{(p, i - 1)} \vee \mathbf{1}_{b_{p, i}}) - F(z^{(p, i - 1)})\\
  &\geq w_{p, i}\\
  &= {\eps^2 W_p \over r} \left\lfloor {F(z^{(p, i - 1)} \vee \mathbf{1}_{\tilde{o}_i}) - F(z^{(p, i - 1)}) \over \eps^2 W_p/r} \right\rfloor\\ 
  &\geq {\eps^2 W_p \over r} \left( {F(z^{(p, i - 1)} \vee \mathbf{1}_{\tilde{o}_i}) - F(z^{(p, i - 1)}) \over \eps^2 W_p / r} - 1\right)\\
  &= F(z^{(p, i - 1)} \vee \mathbf{1}_{\tilde{o}_i}) - F(z^{(p, i - 1)}) - {\eps^2 W_p \over r}
\end{align*}
Since $o'_i \in \{b_{p, i}, \tilde{o}_i\}$ and $o'_i \in \mathrm{SO}_i$, it follows that
\begin{align*}
  F(z^{(p, i)}) - F(z^{(p, i - 1)}) &\geq F(z^{(p, i - 1)} \vee \mathbf{1}_{o'_i}) - F(z^{(p, i - 1)}) - {\eps^2 W_p \over r}\\
 &\geq {(1 - 5\eps) c(o'_i) (F(z^{(p, i - 1)} \vee \mathbf{1}_{\OPT_2}) - F(z^{(p, i - 1)})) \over c(\OPT_2)} - {\eps^2 W_p \over r}
\end{align*}

By adding these inequalities for the first $i$ iterations and using submodularity, we obtain
\begin{align*}
  F(z^{(p, i)}) - F(z^{(p, 0)})
  &\geq \sum_{j = 1}^i (F(z^{(p, j - 1)} \vee \mathbf{1}_{o'_j}) - F(z^{(p, j - 1)})) - {i \eps^2 W_p \over r}\\
  &\geq F(z^{(p, i)} \vee \mathbf{1}_{O_i}) - F(z^{(p, i)}) - {i \eps^2 W_p \over r}\\
  &\geq F(z^{(p, i)} \vee \mathbf{1}_{O_i}) - F(z^{(p, i)}) - \eps^2 W_p. 
\end{align*}
Similarly,
\begin{align*}
 F(z^{(p, i)}) - F(z^{(p, 0)})
 &\geq \sum_{j = 1}^i {(1 - 5\eps) c(o'_j) (F(z^{(p, j - 1)} \vee \mathbf{1}_{\OPT_2}) - F(z^{(p, j - 1)})) \over c(\OPT_2)} - {i \eps^2 W_p \over r}\\
 &\geq \sum_{j = 1}^i {(1 - 5\eps) c(o'_j) (F(z^{(p, i)} \vee \mathbf{1}_{\OPT_2}) - F(z^{(p, i)})) \over c(\OPT_2)} - {i \eps^2 W_p \over r}\\
 &= {(1 - 5\eps) c(O_i) (F(z^{(p, i)} \vee \mathbf{1}_{\OPT_2}) - F(z^{(p, i)})) \over c(\OPT_2)} - {i \eps^2 W_p \over r}\\
  &\geq {(1 - 5\eps) c(O_i) (F(z^{(p, i)} \vee \mathbf{1}_{\OPT_2}) - F(z^{(p, i)})) \over c(\OPT_2)} - \eps^2 W_p
\end{align*}

If the phase does not end in iteration $i$ then $F(z^{(p,i)}) - F(z^{(p,0)}) \leq \eps(1 - 12\eps) W_p$. It follows that 
  \[ {(1-5\eps) c(O_i) (F(z^{(p, i)} \vee \mathbf{1}_{\OPT_2}) - F(z^{(p,i)})) \over c(\OPT_2)} - {i \eps^2 W_p \over r} \leq F(z^{(p,i)}) - F(z^{(p,0)}) \leq \eps(1 - 12\eps) W_p.\]
Additionally,
\begin{align*}
  F(z^{(p, i)} \vee \mathbf{1}_{\OPT_2}) - F(z^{(p, i)}) &\geq F(z^{(p, 0)} \vee \mathbf{1}_{\OPT_2}) - F(z^{(p, i)})\\
  &= (F(z^{(p, 0)} \vee \mathbf{1}_{\OPT_2}) - F(z^{(p, 0)})) - (F(z^{(p, i)}) - F(z^{(p, 0)}))\\
  &\geq W_p - \eps (1 - 12\eps) W_p.
\end{align*}
The first line follows from monotonicity. The third line follows from the definition of $W_p$ and the fact that $F(z^{(p, i)}) - F(z^{(p, 0)}) \leq \eps (1 - 12\eps) W_p$.

Therefore
  \[ {(1 - 5\eps) c(O_i) (W_p - \eps(1 - 12\eps)W_p) \over c(\OPT_2)} - {i \eps^2 W_p \over r} \leq F(z^{(p,i)}) - F(z^{(p,0)}) \leq \eps(1-12\eps)W_p.\]
Thus
  \[ F(z^{(p,i)}) - F(z^{(p,0)}) \geq {(1 - 6\eps) c(O_i)W _p \over c(\OPT_2)} - {i \eps^2 W_p \over r}, \]
and
  \[ c(O_i) \leq \eps(1 - 4\eps) c(\OPT_2).\]
\end{proof}

In the following lemma, we wrap up the analysis of phase $p$. After the $\OPT_1$ stage and the stage for the large value items in $\OPT_2$, either the phase ends because we have already collected the target marginal value or we use Density Greedy to collect the remaining value. In each of these cases, we show that we reach the target value of $\eps (1 - 12\eps) W_p$ and the total cost of the items we select is at most $\eps (1 - c(\OPT_1))$.

\begin{lemma}
\label{lem2}
  Suppose that we run the $\textproc{KnapsackGuess}$ algorithm with the values $\{v_{p, i}\}$, $\{w_{p, i}\}$, and $W_p$ guaranteed by Lemmas~\ref{lem1} and \ref{lem1forOPT2} as input. We have
\begin{itemize}
  \item $F(x_p) - F(z^{(p, 0)}) \geq \eps (1 - 12 \eps) W_p$, and 
  \item $c(B_p) + c(C_p) \le \eps(1-c(\OPT_1))$.
\end{itemize}
\end{lemma}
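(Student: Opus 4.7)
The plan is to split into two cases based on whether the phase terminates during the large-value items stage of \textproc{KnapsackGuess} or continues into the Density Greedy stage.

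\textbf{Case A (phase ends at iteration $i^* \in \{1, \ldots, r_p\}$ of the large-value stage).} Here $C_p = \emptyset$ and $x_p = z^{(p, i^*)}$, so the value bound $F(x_p) - F(z^{(p, 0)}) \geq \eps(1 - 12\eps) W_p$ is immediate from the stopping condition on line 22. For the cost bound, I apply Lemma~\ref{lem1forOPT2}(2) at iteration $i^* - 1$, where the phase had not yet ended, to get $c(O_{i^* - 1}) \leq \eps(1 - 4\eps) c(\OPT_2)$. Combined with the per-item bound $c_{b_{p, i^*}} \leq c_{o'_{i^*}} \leq \eps^2 (1 - c(\OPT_1))$ (since $o'_{i^*} \in \OPT_2$) and $c(\OPT_2) \leq 1 - c(\OPT_1)$, this gives $c(B_p) \leq \eps(1 - 3\eps)(1 - c(\OPT_1))$; the case $i^* = 1$ is trivial from the per-item bound alone.

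\textbf{Case B (Density Greedy runs).} Since the phase did not end at iteration $r_p$, Lemma~\ref{lem1forOPT2}(2) immediately gives $c(B_p) \leq c(O_{r_p}) \leq \eps(1 - 4\eps) c(\OPT_2)$ and $F(z^{(p, r_p)}) - F(z^{(p, 0)}) \geq (1 - 6\eps) c(O_{r_p}) W_p / c(\OPT_2) - \eps^2 W_p$. Let $\Delta := \eps(1-12\eps) W_p - (F(z^{(p, r_p)}) - F(z^{(p, 0)}))$ be the residual gain target; this lower bound on the gain so far yields a matching upper bound on $\Delta$ in terms of $c(O_{r_p})$. The heart of the argument is a density lower bound for items selected by Density Greedy: Lemma~\ref{lem1forOPT2}(3) at $i = r_p$ says every $o \in \OPT_2 \setminus O_{r_p}$ either lies in $V'$ (marginal value at most $w_{p, r_p + 1} + \eps^2 W_p / r \leq \eps W_p / r$) or has density below $(1 - 5\eps)(F(z^{(p, r_p)} \vee \mathbf{1}_{\OPT_2}) - F(z^{(p, r_p)}))/c(\OPT_2)$. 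Combined with $F(z^{(p, r_p)} \vee \mathbf{1}_{\OPT_2}) - F(z^{(p, r_p)}) \geq (1 - \eps) W_p$ (which follows from $W_p \leq F(z^{(p, 0)} \vee \mathbf{1}_{\OPT_2}) - F(z^{(p, 0)})$, monotonicity, and the Case B assumption), this implies $F(z^{(p, r_p)} \vee \mathbf{1}_{\OPT_2 \cap V'}) - F(z^{(p, r_p)}) \geq (1 - O(\eps)) W_p$. A standard density-greedy argument with the lazy $(1-\eps)$ slack then shows that every item added by Density Greedy has density at least $(1 - O(\eps)) W_p / c(\OPT_2)$ with respect to the current state, so summing marginal gains and accounting for an overshoot of at most $\eps W_p / r$ by the final item yields
\[ c(C_p) \leq \frac{(\Delta + \eps^2 W_p) \, c(\OPT_2)}{(1 - O(\eps)) W_p}. \]

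Substituting the upper bound on $\Delta$ into this inequality, the $c(O_{r_p})$ term in $c(C_p)$ cancels against $c(B_p) \leq c(O_{r_p})$ up to lower-order $\eps$-terms that are absorbed into the $12\eps - 6\eps$ slack built into the stopping condition; using $c(\OPT_2) \leq 1 - c(\OPT_1)$, this gives $c(B_p) + c(C_p) \leq \eps(1 - c(\OPT_1))$. The value bound in Case B is then automatic: since $F(z^{(p, r_p)} \vee \mathbf{1}_{V'}) - F(z^{(p, r_p)}) \geq F(z^{(p, r_p)} \vee \mathbf{1}_{\OPT_2 \cap V'}) - F(z^{(p, r_p)}) \geq (1 - O(\eps)) W_p \geq \Delta$, Density Greedy reaches the target gain before its queue empties, so $F(x_p) - F(z^{(p, 0)}) \geq \eps(1 - 12\eps) W_p$. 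The hard part is the tightness of the cost bound in Case B: the slack between $c(B_p) \leq \eps(1 - 4\eps)(1 - c(\OPT_1))$ and the target $\eps(1 - c(\OPT_1))$ is only $4\eps^2(1 - c(\OPT_1))$, which is precisely why Lemma~\ref{lem1forOPT2}(2) and (3) are stated with carefully matched $(1 - 6\eps)$ and $(1 - 5\eps)$ constants---together they ensure that the $c(O_{r_p})$ contributions from $B_p$ and from the savings in $\Delta$ cancel to leading order.
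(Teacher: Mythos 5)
Your overall architecture matches the paper's: the same case split, the same treatment of Case A via Lemma~\ref{lem1forOPT2} (bounding $c(O_{i^*-1})$ plus the per-item cost cap on the last element), and the same cancellation of the $c(O_{r_p})$ contribution against $c(B_p)$ at the end of Case B. The gap is in the linchpin of Case B: the deduction that $F(z^{(p,r_p)} \vee \mathbf{1}_{\OPT_2 \cap V'}) - F(z^{(p,r_p)}) \geq (1-O(\eps))W_p$ does not follow from Lemma~\ref{lem1forOPT2}(3), and the statement is false in general. That lemma only says that each $o \in \OPT_2 \setminus O$ outside $V'$ has density below $L = (1-5\eps)\bigl(F(z^{(p,r_p)} \vee \mathbf{1}_{\OPT_2}) - F(z^{(p,r_p)})\bigr)/c(\OPT_2)$, i.e., below roughly the \emph{average} density of $\OPT_2$. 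Bounding their value by $L$ times their total cost only shows they carry at most a $(1-5\eps)$ fraction of the value of $\OPT_2$ --- essentially all of it. Indeed, $\OPT_2$ can consist largely of elements whose marginal value on top of $z^{(p,r_p)}$ is slightly above the $\eps^2 W_p$ filtering threshold and whose density is slightly below $L$; these are excluded from $V'$ yet collectively carry $(1-O(\eps))W_p$, leaving only $O(\eps) W_p$ for $\OPT_2 \cap V'$. With that claim gone, both your density lower bound for the greedy-selected items and your ``the queue does not empty before the target is met'' argument collapse as stated.

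The repair is the paper's per-iteration contradiction argument (Claim~\ref{claim5}), which never needs $V' \cap \OPT_2$ to carry most of the value. One first shows (Claim~\ref{claim1}, absent from your sketch) that $\OPT_2 \setminus O$ --- not all of $\OPT_2$ --- carries at least $(1-2\eps)W_p$ on top of $x = z^{(p,r_p)}$, using that the guessing stage fell short of $\eps(1-12\eps)W_p$ together with property (2) of Lemma~\ref{lem1forOPT2}. Then, at any greedy iteration, if \emph{every} surviving element of $\OPT_2 \setminus O$ had density below $L/(1-\eps)$ with respect to the current solution, their total remaining contribution would be at most $L\,c(\OPT_2 \setminus O)/(1-\eps) \le (1-3\eps)W_p$, which against the $(1-2\eps)W_p$ bound forces $F(x \vee \mathbf{1}_{S_{i-1}}) - F(x) \ge \eps W_p$, exceeding the residual target $\eps(1-12\eps)W_p$ --- so the greedy would already have stopped. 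The low-density elements may carry almost all of $W_p$; what saves the argument is that the greedy only needs to collect $O(\eps)W_p$, and the $\eps W_p$ slack between $(1-2\eps)$ and $(1-3\eps)$ is exactly what supplies the contradiction. This same claim also guarantees that a high-density element of $\OPT_2 \setminus O$ (which, by Claim~\ref{claim3}, must then have small marginal value and hence lie in $V'$) is always available, which is what yields the value bound in Case B. Finally, your sketch does not address the elements discarded by the lazy implementation after too many updates; the paper's Claim~\ref{claim4} is needed to show their contribution is negligible.
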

\begin{proof}
  We first consider the case when phase $p$ ends before running \textproc{LazyDensityGreedy} (on line~{23} of \textproc{KnapsackGuess}). We show that the lemma follows from Lemma~\ref{lem1forOPT2}. Since the first condition follows immediately from the fact that the phase ends on line~{23}, it suffices to verify the second condition. Since we do not run \textproc{LazyDensityGreedy}, we have $C_p = \emptyset$ and thus it suffices to show that $c(B_p) \leq \eps (1 - c(\OPT_1))$. Consider the last iteration $i$ where $F(z^{(p, i)}) - F(z^{(p, 0)}) < \eps(1 - 12\eps) W_p$. By property 1 of Lemma~\ref{lem1forOPT2}, we have $c(B_p) \leq c(O_{i + 1})$. By property 2 of Lemma~\ref{lem1forOPT2}, $c(O_i) \leq \eps (1 - 4\eps) c(\OPT_2)$. Additionally, $c_{o'_{i + 1}} \leq \eps^2 (1 - c(\OPT_1))$, since every item in $\OPT_2$ has cost at most $\eps^2 (1 - c(\OPT_1))$. Using these observations and the fact that $c(\OPT_2) \leq 1 - c(\OPT_1)$, we obtain
    \[ c(B_p) \leq c(O_{i + 1}) \leq \eps(1 - \eps) c(\OPT_2) + \eps^2 (1 - c(\OPT_1)) \leq \eps (1 - c(\OPT_1)).\]
  Next, we consider the case when \textproc{LazyDensityGreedy} is called. To simplify notation, in the remainder of the proof we use $x$ to denote the starting solution of \textproc{LazyDensityGreedy}, i.e., $x = z^{(p, r_p)}$.

  The following claim shows that, since the guessing stage for $\OPT_2$ did not pick up enough value, the marginal value of $\OPT_2 \setminus O$ on top of $x$ is large. Recall that $W_p$ is approximately the marginal value of $\OPT_2$ on top of $z^{(p, 0)}$, and thus the following claim is showing that $\OPT_2 \setminus O$ accounts for most of this total value.

  \begin{claim}
  \label{claim1}
    $F(x \vee \mathbf{1}_{\OPT_2 \setminus O}) - F(x)\geq (1 - 2\eps) W_p.$
  \end{claim}
  \begin{proof}
    Property~{2} in Lemma~\ref{lem1forOPT2} and the fact that the guessing stage for $\OPT_2$ does not pick up enough value give us the following inequalities:
    \begin{align*}
      F(x \vee \mathbf{1}_{O}) - F(x) &\leq F(x) - F(z^{(p, 0)}) + \eps^2 W_p\\
      F(x) - F(z^{(p, 0)}) &\leq \eps (1 - 12 \eps) W_p
    \end{align*}
    By adding the two inequalities, we obtain
    \begin{equation}
    \label{eq2}
      F(x \vee \mathbf{1}_O) - F(z^{(p, 0)})
      \leq F(x) - F(z^{(p, 0)}) + \eps (1 - 11\eps) W_p
      \leq 2 \eps W_p
    \end{equation}
    We have
    \begin{align*}
      F(x \vee \mathbf{1}_{\OPT_2 \setminus O}) - F(x) &\geq F(x \vee \mathbf{1}_{\OPT_2}) - F(x \vee \mathbf{1}_O)\\
      &= (F(x \vee \mathbf{1}_{\OPT_2}) - F(z^{(p, 0)})) - (F(x \vee \mathbf{1}_O) - F(z^{(p, 0)}))\\
      &\geq F(x \vee \mathbf{1}_{\OPT_2}) - F(z^{(p, 0)}) - 2 \eps W_p\\
      &\geq F(z^{(p, 0)} \vee \mathbf{1}_{\OPT_2}) - F(z^{(p, 0)}) - 2\eps W_p\\
      &\geq W_p - 2\eps W_p\\
      &\geq (1 - 2\eps) W_p. 
    \end{align*}
    On the first line, we used submodularity. On the third line, we used (\ref{eq2}). The fourth line follows from monotonicity. The fifth line follows from the definition of $W_p$.
  \end{proof}

  We now show that the filtering of the items right before we ran \textproc{LazyDensityGreedy} (line~{27} of \textproc{KnapsackGuess}) did not remove any element of $\OPT_2 \setminus O$ that would have been selected by \textproc{LazyDensityGreedy}. Let
  \[ L := \frac{(1-5\eps)(F(x \vee \mathbf{1}_{\OPT_2}) - F(x))}{c(\OPT_2)}. \]
  We will show that every element $o \in \OPT_2 \setminus O$ has marginal value on top of $x$ of at most $\eps^2 W_p$ or it has density less than $L$, and that the elements selected by \textproc{LazyDensityGreedy} have density at least $L$.

  \begin{claim}
  \label{claim3}
    For every $o \in \OPT_2 \setminus O$, $F(x \vee \mathbf{1}_o) - F(x) \leq \eps^2 W_p$ or ${F(x \vee \mathbf{1}_o) - F(x) \over c_o} < L$.
  \end{claim}
  \begin{proof}
    If $r_p = r$ then $w_{p, i} \geq \eps (1 - \eps) W_p / r$ for all $1 \leq i \leq r$ and thus
      \[ F(z^{(p, r)}) - F(z^{(p, 0)}) \geq \sum_{i = 1}^r w_{p, i} \geq \eps(1 - \eps) W_p.\]
      In this case, the phase ends before running \textproc{LazyDensityGreedy}. Thus we must have $r_p < r$ and $w_{p, r_p + 1} \leq \eps (1 - \eps) W_p / r$. By property~{3} in Lemma~\ref{lem1forOPT2}, for every $o \in \OPT_2 \setminus O$, we have
    \[ F(x \vee \mathbf{1}_o) - F(x) \le w_{p, r_p + 1} + \eps^2 W_p/r \leq \eps^2 W_p,\] 
    or
    \[ {F(x \vee \mathbf{1}_{o}) - F(x) \over c_o} < L.\]
  \end{proof}

Before showing that the algorithm stops before reaching density $L$, let us first address the elements that are removed from the queue on line~{27} (they are added to the set $D$ consisting of all elements that were updated too many times).  The following claims shows that their marginal values is negligible.

 \begin{claim}
 \label{claim4}
  Consider an iteration $i$ of \textproc{LazyDensityGreedy}. For every element $e \in D$, we have
    \[ F(x \vee \mathbf{1}_{S_i \cup \{e\}}) - F(x \vee \mathbf{1}_{S_i}) \leq \left({\eps \over n}\right)^2 f(\OPT).\]
 \end{claim}
 \begin{proof}
    Let $e \in D$ and suppose that $e$ was added to $D$ during iteration $j \leq i$. Then $e$ was updated more than $2\ln(n/\eps)/\eps$ times in the first $j$ iterations. Since each update happens when the marginal value decreases by at least a $(1 - \eps)$ factor, the marginal value of $e$ at the beginning of iteration $j$ is at most
    \[ (1 - \eps)^{{2\ln(n/\eps) \over \eps}} (F(x \vee \mathbf{1}_{e}) - F(x)) \leq \left({\eps \over n}\right)^2 (F(x \vee \mathbf{1}_{e}) - F(x)) \leq \left({\eps \over n}\right)^2 f(\OPT).\]
    The first inequality follows from the inequality $1 - x \leq e^{-x}$, and the second inequality follows from $f(\OPT) \geq \max_e f(\{e\})$.

    By submodularity, the marginal value of $e$ can only decrease between iteration $j$ and $i$, and the claim follows.
 \end{proof}

  \begin{claim}
  \label{claim5}
    Consider an iteration $i$ of \textproc{LazyDensityGreedy}. The density of the element $e_i$ selected in iteration $i$ is at least $L$, i.e., 
    \[ {F(x \vee \mathbf{1}_{S_i}) - F(x \vee \mathbf{1}_{S_{i - 1}}) \over c_{e_i}} \geq L.\]
  \end{claim}
  \begin{proof}
    Let $\OPT'_2 = \OPT_2 \setminus D$. Suppose that the density of every element $o \in \OPT'_2 \setminus (O \cup S_{i - 1})$ is less than $L / (1 - \eps)$, i.e.,
      \[ {F(x \vee \mathbf{1}_{S_{i - 1} \cup \{o\}}) - F(x \vee \mathbf{1}_{S_{i - 1}}) \over c_o} < {L \over 1 - \eps}.\]
    It follows that
    \begin{equation}
    \label{eq1}
      F(x \vee \mathbf{1}_{S_{i - 1} \cup (\OPT'_2 \setminus O)}) - F(x \vee \mathbf{1}_{S_{i - 1}}) < {L \cdot c(\OPT'_2 \setminus O) \over 1 - \eps}  \leq {L \cdot c(\OPT_2 \setminus O) \over 1 - \eps}.
    \end{equation}
  Using the above inequality, Claim~\ref{claim4}, and the facts that $F(x \vee \mathbf{1}_{\OPT_2}) - F(x) \geq W_p$ and $W_p \geq \eps M$ (if $W_p = 0$, we never run \textproc{LazyDensityGreedy}), we obtain
  \begin{align*}
    &F(x \vee \mathbf{1}_{S_{i - 1} \cup (\OPT_2 \setminus O)}) - F(x \vee  \mathbf{1}_{S_{i - 1}})\\ 
    &\leq F(x \vee \mathbf{1}_{S_{i - 1} \cup (\OPT'_2 \setminus O)}) - F(x \vee  \mathbf{1}_{S_{i - 1}}) + \eps^2 n^{-1} M\\
    &\leq {L c(\OPT_2 \setminus O) \over 1 - \eps} + \eps n^{-1} W_p\\
    &= {(1 - 5\eps) (F(x \vee \mathbf{1}_{\OPT_2}) - F(x)) c(\OPT_2 \setminus O) \over (1 - \eps) c(\OPT_2)} + \eps n^{-1} W_p\\
    &\leq {(1-5\eps) c(\OPT_2 \setminus O) W_p \over (1-\eps) c(\OPT_2)} + \eps n^{-1} W_p\\
    &\leq (1 - 3\eps) W_p.
  \end{align*}
  On the first line, we used Claim~\ref{claim4}. On the second line, we used (\ref{eq1}) and $W_p \geq \eps M$. On the third line, we used the definition of $L$. On the fourth line, we used $F(x \vee \mathbf{1}_{\OPT_2}) - F(x) \geq W_p$.

  Thus,
  \begin{align*}
    F(x\vee  \mathbf{1}_{S_{i - 1}}) - F(x) 
    &\geq -(1-3\eps) W_p +  F(x\vee \mathbf{1}_{S_i} \vee \mathbf{1}_{\OPT_2\setminus O}) - F(x)\\
    &\ge -(1-3\eps) W_p +  (1-2\eps) W_p\\
    &= \eps \cdot W_p
  \end{align*}
  In the second inequality, we used Claim~\ref{claim1}.

  Therefore the phase ends at the end of iteration $i - 1$, which is a contradiction. Thus some element in $\OPT'_2 \setminus O$ has density at least $L/(1 - \eps)$. Since $e_i$ has density at least $(1 - \eps)$ times the best density, it follows that the density of $e_i$ is at least $L$.
  \end{proof}

  Thus, Claims~\ref{claim3} and \ref{claim5} imply that all of the elements of $\OPT_2 \setminus O$ that are relevant for \textproc{LazyDensityGreedy} are included in $V'$. Now we can complete the proof as follows. As before, we let $\OPT'_2 = \OPT_2 \setminus D$, i.e., the subset of $\OPT_2$ that was not removed from the queue on line~{27}. 
  
  When the element $e_i$ is added, since every $o \in \OPT'_2 \setminus (O \cup S_{i - 1})$ is in the queue, we have
  \[
  \frac{F(x\vee \mathbf{1}_{S_i}) - F(x \vee \mathbf{1}_{S_{i-1}})}{c_{e_i}} \ge (1-\eps)\cdot\frac{F(x\vee \mathbf{1}_{S_{i-1}}\vee \mathbf{1}_o) - F(x\vee \mathbf{1}_{S_{i-1}})}{c_o}~\forall o\in \OPT'_2\setminus O.
  \]
  Thus,
  \[
  \frac{F(x \vee \mathbf{1}_{S_i}) - F(x \vee \mathbf{1}_{S_{i-1}})}{c_{e_i}} \ge (1-\eps)\cdot \frac{F(x \vee \mathbf{1}_{S_{i-1}} \vee \mathbf{1}_{\OPT'_2 \setminus O}) - F(x \vee \mathbf{1}_{S_{i-1}})}{c(\OPT'_2\setminus O)}.
  \]
  By summing up the above inequalities over all iterations $i \leq \ell$ and using submodularity, we obtain
  \[ F(x \vee \mathbf{1}_{S_{\ell}}) - F(x) \ge \frac{(1-\eps) c(S_{\ell})}{c(\OPT'_2\setminus O)} (F(x \vee \mathbf{1}_{S_{\ell}} \vee \mathbf{1}_{\OPT'_2\setminus O}) - F(x \vee \mathbf{1}_{S_{\ell}})).\] 
  If the algorithm does not terminate in iteration $\ell$ then $F(x \vee \mathbf{1}_{S_{\ell}}) - F(x) < \eps W_p$ and therefore
  \begin{align*}
    & F(x \vee \mathbf{1}_{S_{\ell}}) - F(x)\\
    &\geq \frac{(1-\eps)c(S_{\ell})}{c(\OPT'_2\setminus O)} (F(x \vee \mathbf{1}_{S_{\ell}} \vee \mathbf{1}_{\OPT'_2\setminus O}) - F(x) -\eps W_p)\\
    &= \frac{(1-\eps)c(S_{\ell})}{c(\OPT'_2\setminus O)} (F(x \vee \mathbf{1}_{S_{\ell}} \vee \mathbf{1}_{\OPT'_2\setminus O}) - F(x \vee \mathbf{1}_{S_{\ell}}) + F(x \vee \mathbf{1}_{S_{\ell}}) - F(x) - \eps W_p)\\
    &\geq \frac{(1-\eps)c(S_{\ell})}{c(\OPT'_2\setminus O)} (F(x \vee \mathbf{1}_{S_{\ell}} \vee \mathbf{1}_{\OPT_2\setminus O}) - F(x \vee \mathbf{1}_{S_{\ell}}) - \eps^2 n^{-1}M + F(x \vee \mathbf{1}_{S_{\ell}}) - F(x)  - \eps W_p)\\
    &\geq \frac{(1-\eps)c(S_{\ell})}{c(\OPT'_2\setminus O)} (F(x \vee \mathbf{1}_{S_{\ell}} \vee \mathbf{1}_{\OPT_2\setminus O}) - F(x) - 2\eps W_p)\\
    &\geq \frac{(1-\eps)c(S_{\ell})}{c(\OPT'_2\setminus O)} (F(x \vee \mathbf{1}_{\OPT_2\setminus O}) - F(x) - 2\eps W_p)\\
    &\geq \frac{(1-\eps)c(S_{\ell})}{c(\OPT'_2\setminus O)} (1 - 4\eps) W_p\\ 
    &\geq \frac{(1-\eps)c(S_{\ell})}{c(\OPT_2)} (1 - 4\eps) W_p. 
  \end{align*}
  On the third line, we have used Claim~\ref{claim4}. On the fourth line, we have used that $W_p \geq \eps M$. On the fifth line, we have used monotonicity. On the sixth line, we have used Claim~\ref{claim1}.

  Additionally, by property~{2} in Lemma~\ref{lem1forOPT2}, we have
    \[ F(x) - F(z^{(p, 0)}) \geq {(1 - 6 \eps) c(O) \over c(\OPT_2)} W_p - \eps^2 W_p.\]
  Thus, for any iteration $\ell$ where \textproc{LazyDensityGreedy} does not stop, we have
  \begin{align*}
    \eps (1 - 12\eps) W_p &\geq F(x \vee \mathbf{1}_{S_{\ell}}) - F(z^{(p, 0)})\\
    &= F(x \vee \mathbf{1}_{S_{\ell}}) - F(x) + F(x) - F(z^{(p, 0)})\\
    &\geq {(1-\eps)c(S_{\ell}) \over c(\OPT_2)} (1 - 4\eps) W_p + {(1 - 6 \eps) c(O) \over c(\OPT_2)} W_p - \eps^2 W_p\\
    &\geq {(1 - 6\eps) c(S_{\ell}) \over c(\OPT_2)} W_p + {(1 - 6 \eps) c(O) \over c(\OPT_2)} W_p - \eps^2 W_p. 
  \end{align*}
  By property~{1} in Lemma~\ref{lem1forOPT2}, $c(B_p) \leq c(O)$. Therefore
  \begin{align*}
    c(S_{\ell}) + c(B_p) &\leq c(S_{\ell}) + c(O)\\
    &\leq {\eps (1 - 11 \eps) \over 1 - 6\eps} c(\OPT_2)\\
    &\leq \eps (1 - 5\eps) c(\OPT_2).
  \end{align*}
  Finally, consider the last element $e$ selected by \textproc{LazyDensityGreedy}. By Claim~\ref{claim5}, the density of $e$ is at least $L \geq (1 - 5\eps) W_p / c(\OPT_2)$. Additionally, the marginal value of $e$ is at most $\eps^2 W_p$, since $e \in V'$. Therefore $c_e \le \eps^2 c(\OPT_2)/(1-5\eps)$. Thus, when \textproc{LazyDensityGreedy} finishes, we have $c(B_p) + c(C_p) \le \eps c(\OPT_2)$.
\end{proof}

  \begin{lemma}
  \label{claim7}
    $F(x_p) - F(x_{p - 1}) \geq \eps (1 - 12\eps) (f(\OPT) - F(x_p)) - 2\eps^2 M$.
  \end{lemma}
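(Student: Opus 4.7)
The plan is to decompose the phase-$p$ gain into two pieces corresponding to the $\OPT_1$ stage and the rest of the phase, and then charge each piece against the residual marginal value $f(\OPT)-F(x_p)$ split along $\OPT=\OPT_1\cup\OPT_2$. Concretely, I will first apply monotonicity and submodularity to obtain
\[
 f(\OPT)-F(x_p) \;\le\; \bigl(F(x_p\vee\mathbf{1}_{\OPT_1})-F(x_p)\bigr) + \bigl(F(x_p\vee\mathbf{1}_{\OPT_2})-F(x_p)\bigr),
\]
so it suffices to lower bound the progress in phase $p$ by an $\eps(1-12\eps)$ fraction of each of these two terms (up to $\eps^2 M$ slack each).

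Next I would write the phase gain as
\[
 F(x_p)-F(x_{p-1}) \;=\; \bigl(F(y^{(p,t)})-F(y^{(p,0)})\bigr) + \bigl(F(x_p)-F(z^{(p,0)})\bigr),
\]
using $y^{(p,0)}=x_{p-1}$ and $z^{(p,0)}=y^{(p,t)}$. For the first summand, Lemma~\ref{lem1} gives
\[
 F(y^{(p,t)})-F(y^{(p,0)}) \;\ge\; \eps\bigl(F(y^{(p,t)}\vee\mathbf{1}_{\OPT_1})-F(y^{(p,t)})\bigr) - \eps^2 M,
\]
and since $y^{(p,t)} \le x_p$ coordinatewise, submodularity of $F$ (diminishing returns along a coordinate-wise increase) yields $F(y^{(p,t)}\vee\mathbf{1}_{\OPT_1})-F(y^{(p,t)}) \ge F(x_p\vee\mathbf{1}_{\OPT_1})-F(x_p)$. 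This handles the $\OPT_1$ charge with multiplier $\eps \ge \eps(1-12\eps)$.

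For the second summand, Lemma~\ref{lem2} gives $F(x_p)-F(z^{(p,0)}) \ge \eps(1-12\eps)W_p$. The key bookkeeping step is that $W_p$ is defined (in the proof of Lemma~\ref{lem1forOPT2}) as the floor of $F(z^{(p,0)}\vee\mathbf{1}_{\OPT_2})-F(z^{(p,0)})$ to an $\eps M$ grid, so $W_p \ge F(z^{(p,0)}\vee\mathbf{1}_{\OPT_2})-F(z^{(p,0)}) - \eps M$, and again by submodularity applied to $z^{(p,0)} \le x_p$ this is at least $F(x_p\vee\mathbf{1}_{\OPT_2})-F(x_p) - \eps M$. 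Adding the two lower bounds, using the submodular-decomposition inequality above, and absorbing the $\eps(1-12\eps)\cdot \eps M$ term into $\eps^2 M$ gives the claim. The only wrinkle is the corner case $W_p=0$ (when the algorithm skips both $\OPT_2$ stages and sets $x_p = y^{(p,t)}$): here the definition of $W_p$ forces $F(x_p\vee\mathbf{1}_{\OPT_2})-F(x_p) < \eps M$, so the $\OPT_2$ contribution to $f(\OPT)-F(x_p)$ is itself $<\eps M$ and is absorbed into the $2\eps^2 M$ slack.

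There is no real obstacle in this lemma; the routine calculation is carefully using monotonicity and submodularity to transfer bounds stated at the earlier iterates $y^{(p,t)}$ and $z^{(p,0)}$ to the final phase iterate $x_p$, and noting the multiplicative gap between the $\eps$ multiplier from the $\OPT_1$ stage and the $\eps(1-12\eps)$ multiplier from the $\OPT_2$ stages is in the right direction to state a single clean bound with multiplier $\eps(1-12\eps)$.
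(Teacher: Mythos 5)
Your proposal is correct and follows essentially the same route as the paper: Lemma~\ref{lem1} for the $\OPT_1$ stage, Lemma~\ref{lem2} together with the floor definition of $W_p$ for the $\OPT_2$ stages, then submodularity to merge the $\OPT_1$ and $\OPT_2$ marginals and monotonicity to reach $f(\OPT)-F(x_p)$; the only (harmless) difference is that you transfer all marginals to the common reference point $x_p$ before combining, whereas the paper combines them at the intermediate iterates. Your explicit treatment of the $W_p=0$ corner case is a small bonus the paper leaves implicit.
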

  \begin{proof}
    By Lemma~\ref{lem2},
    \begin{align*}
      & F(y^{(p,t+r_p)} \vee \mathbf{1}_{B_p}) - F(y^{(p, t)})\\
      &\geq \eps (1-12\eps)(F(y^{(p, t)} \vee \mathbf{1}_{\OPT_2}) - F(y^{(p, t)}) - \eps^2 M
    \end{align*}
   Additionally, by the first property in Lemma~\ref{lem1},
    \[ F(y^{(p, t)}) - F(y^{(p, 0)}) \geq \eps (F(y^{(p, 0)} \vee \mathbf{1}_{\OPT_1}) - F(y^{(p, 0)})) - \eps^2 M \]
    By combining the two inequalities, we obtain
    \begin{align*}
      & F(x_p) - F(x_{p - 1})\\
      &\geq \eps (1-12\eps)  \left(F(y^{(p, t)} \vee \mathbf{1}_{\OPT_2}) - F(y^{(p, t)}) + F(y^{(p, 0)} \vee \mathbf{1}_{\OPT_1}) - F(y^{(p, 0)}) \right) - 2\eps^2 M\\
    &\geq \eps (1-12\eps) \left(F(y^{(p, t)} \vee \mathbf{1}_{\OPT}) - F(y^{(p, t)}) \right) - 2\eps^2 M\\
    &\geq \eps (1 - 12\eps) (f(\OPT) - F(x_p)) - 2\eps^2 M
  \end{align*}
  On line 2, we used submodularity:
  \begin{align*}
    &F(y^{(p, t)} \vee \mathbf{1}_{\OPT_2}) + F(y^{(p, 0)} \vee \mathbf{1}_{\OPT_1})\\
    &\geq F((y^{(p, t)} \vee \mathbf{1}_{\OPT_2}) \vee (y^{(p, 0)} \vee \mathbf{1}_{\OPT_1})) + F((y^{(p, t)} \vee \mathbf{1}_{\OPT_2}) \wedge (y^{(p, 0)} \vee \mathbf{1}_{\OPT_1}))\\
    &= F(y^{(p, 0)} \vee \mathbf{1}_{\OPT}) + F(y^{(p, 0)})
  \end{align*}
\end{proof}

We can now complete the proof of Theorem~\ref{thm:monotone-fractional} as follows. We first show the approximation guarantee (property $(1)$). Using Lemma~\ref{claim7} and induction, we will show that, for every phase $p$, we have
  \[ f(\OPT) - F(x_p) \le (1 + \eps(1-12\eps))^{-p} f(\OPT) - 2\eps p M.\] 
In the base case $p = 0$, we have $F(x_p) = 0$ and the right-hand side is also $0$. Now consider $p \geq 1$. By rearranging the inequality in Lemma~\ref{claim7}, we obtain
  \[ (1 + \eps (1 - 12\eps)) (f(\OPT)-F(x_p)) \le f(\OPT) - F(x_{p - 1}) + 2\eps^2 M.\]
Therefore,
  \[  (f(\OPT)-F(x_p)) \le (1 + \eps (1 - 12\eps))^{-1} (f(\OPT) - F(x_{p - 1})) + 2\eps^2 M.\]
Using the inductive hypothesis, we obtain
  \[ f(\OPT) - F(x_p) \le (1 + \eps(1-12\eps))^{-p} f(\OPT) - 2\eps p M.\] 
Thus, after $1/\eps$ phases, we have
  \[ F(x_{1/\eps}) \geq \left(1 - {1 \over (1 + \eps(1-12\eps))^{1/\eps}} \right) f(\OPT) - 2 \eps M \geq \left( 1 - {1 \over e} -O(\eps)\right) f(\OPT).\]  
The second property in the theorem statement follows from the second property in Lemma~\ref{lem1}. Since the $\mathrm{LazyDensityGreedy}$ steps pick elements integrally, the fractional entries in the support of $x_{1/\eps}$ correspond to elements $\{e_{p, i} \colon p \leq 1/\eps, i \leq t\}$ that were selected on lines 6--10 of $\mathrm{KnapsackGuess}$. By the second property in Lemma~\ref{lem1}, for every phase $p$, there is a bijection $\sigma_p$ from the elements $\{e_{p, i} \colon i \leq t\}$ to $\OPT_1$ satisfying $c(e_{p, i}) \leq c(\sigma_p(e_{p, i}))$ for all $1 \leq i \leq t$. We define the mapping $\sigma: \{e_{p, i} \colon p \leq 1/\eps, i \leq t\} \times \{1, 2, \dots, 1/\eps\} \rightarrow \OPT_1$ as follows: $\sigma((e_{p, i}, p)) = \sigma_p(e_{p, i})$. The resulting mapping $\sigma$ satisfies the desired properties, since the iterations of a given phase select distinct elements and increase the value of each such element by $\eps$. 

\begin{algorithm}[t]
\begin{algorithmic}[1]
\State Let $\sigma_1,\ldots, \sigma_k$ be the fractional coordinates of $x$.
\State Sort $\sigma_1,\ldots, \sigma_k$ so that $c_{\sigma_1} \le c_{\sigma_2} \le \cdots\le c_{\sigma_k}$.
\While{$k > 0$}
\If{$k=1$}
\State $x_{\sigma_1} \gets 1$
\State \Return $x$
\EndIf
\If{$x_{\sigma_k} + x_{\sigma_{k-1}} > 1$}
\State Pick $u\in \{0, 1\}$ randomly such that $\Pr[u=1] = \frac{1-x_{\sigma_{k-1}}}{2-x_{\sigma_{k}}-x_{\sigma_{k-1}}}$
\If{$u=1$}
\State $x_{\sigma_k}\gets 1$
\State $x_{\sigma_{k-1}}\gets x_{\sigma_{k-1}}+x_{\sigma_{k}} - 1$
\State $k\gets k-1$
\Else
\State $x_{\sigma_{k-1}}\gets 1$
\State $x_{\sigma_{k}}\gets x_{\sigma_{k-1}}+x_{\sigma_{k}} - 1$
\State $\sigma_{k-1} \gets \sigma_k$
\State $k\gets k-1$
\EndIf
\Else
\State Pick $u\in \{0, 1\}$ randomly such that $\Pr[u=1] = \frac{x_{\sigma_{k}}}{x_{\sigma_{k}}+x_{\sigma_{k-1}}}$
\If{$u=1$}
\State $x_{\sigma_k}\gets x_{\sigma_{k-1}}+x_{\sigma_{k}}$
\State $x_{\sigma_{k-1}}\gets 0$
\State $\sigma_{k-1}\gets \sigma_k$
\State $k\gets k-1$
\Else
\State $x_{\sigma_{k-1}}\gets x_{\sigma_{k-1}}+x_{\sigma_{k}}$
\State $x_{\sigma_{k}}\gets 0$
\State $k\gets k-1$
\EndIf
\If{$x_{\sigma_k} = 1$}
\State $k\gets k-1$
\EndIf
\EndIf
\EndWhile
\end{algorithmic}
\caption{\textproc{Round}($x$)}
\label{alg:round}
\end{algorithm}

\clearpage

\section{Rounding algorithm and analysis of the final solution}
\label{sec:rounding}

In this section, we analyze the rounding algorithm (Algorithm~\ref{alg:round}) that rounds the fractional solution $x$ guaranteed by Theorem~\ref{thm:monotone-fractional}. We round the fractional entries of $x$ as follows. We initialize $\hat{x} = x$. For analysis purposes, we initialize $O = \OPT_1$. We sort the fractional elements in non-increasing order according to their cost. While there are fractional elements, we repeatedly move fractional mass between the two elements with highest cost as follows. Let $e_1$ and $e_2$ be the fractional elements with the highest and second-highest cost, respectively. We consider two cases:

{\bf Case 1: $\hat{x}_{e_1} + \hat{x}_{e_2} \leq 1$}. With probability $\hat{x}_{e_1} / (\hat{x}_{e_1} + \hat{x}_{e_2})$, we update $\hat{x}_{e_1} \leftarrow \hat{x}_{e_1} + \hat{x}_{e_2}$ and $\hat{x}_{e_2} \leftarrow 0$; with the remaining probability, we update $\hat{x}_{e_2} \leftarrow \hat{x}_{e_1} + \hat{x}_{e_2}$ and $\hat{x}_{e_1} \leftarrow 0$. If an element becomes integral, we remove it from the list. For analysis purposes, if an element is rounded up to $1$, we pair it up with the element $o_1 \in O$ with highest cost, and we update $O \leftarrow O \setminus \{o_1\}$. 

{\bf Case 2: $\hat{x}_{e_1} + \hat{x}_{e_2} > 1$}. With probability $(1 - \hat{x}_{e_2}) / (2 - \hat{x}_{e_1} - \hat{x}_{e_2})$, we update $\hat{x}_{e_1} \leftarrow 1$ and $\hat{x}_{e_2} \leftarrow \hat{x}_{e_1} + \hat{x}_{e_2} - 1$; with the remaining probability, we update $\hat{x}_{e_2} \leftarrow 1$ and $\hat{x}_{e_1} \leftarrow \hat{x}_{e_1} + \hat{x}_{e_2} - 1$. If an element becomes integral, we remove it from the list. For analysis purposes, if an element is rounded up to $1$, we pair it up with an element in $O$ as follows. If the element $e_1$ with the highest cost is rounded up to $1$, we pair up $e$ with the element $o_1 \in O$ with highest cost, and we update $O \leftarrow O \setminus \{o_1\}$. If the element $e_2$ with the second-highest cost is rounded up to $1$, we pair up $e_2$ with the element $o_2 \in O$ with the second-highest cost, and we update $O \leftarrow O \setminus \{o_2\}$.

If there is only one fractional entry then we can round this entry up to 1 and pair up this element with the element $o_1\in O$ with highest cost.

We now turn to the analysis of the rounding. We first show that the expected value of the rounded solution is at least $F(x)$. We then show that the cost of the fractional elements that were rounded up to $1$ is at most $c(\OPT_1)$, thus ensuring that the final rounded solution is feasible. 

\begin{lemma}
  $\Ex[F(\hat{x})] \geq F(x)$.
\end{lemma}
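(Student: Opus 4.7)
The plan is to view the rounding as a sequence of two-coordinate swap steps and use the fact that the multilinear extension, restricted to two coordinates with the others fixed, is a bilinear function whose cross-term coefficient is non-positive by submodularity.

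Fix a step of the algorithm and condition on the current fractional vector $\hat x$. Let $e_1, e_2$ be the two fractional coordinates being modified. Writing the other coordinates as constants, we have
\[
F(\hat x) = A + B \hat x_{e_1} + C \hat x_{e_2} + D \hat x_{e_1} \hat x_{e_2},
\]
where $A, B, C, D$ depend on the other coordinates. A standard consequence of submodularity of $f$ (which transfers to the multilinear extension as $\partial^2 F / \partial x_{e_1} \partial x_{e_2} \leq 0$) is that $D \leq 0$.

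I will verify the two cases separately. In Case 1, the two outcomes are $(s, 0)$ and $(0, s)$ where $s = \hat x_{e_1} + \hat x_{e_2}$, taken with probabilities $\hat x_{e_1}/s$ and $\hat x_{e_2}/s$. A direct computation shows that the expected new value of $F$ equals $A + B \hat x_{e_1} + C \hat x_{e_2}$, so
\[
\Ex[F(\hat x_{\text{new}}) \mid \hat x] - F(\hat x) = -D \hat x_{e_1} \hat x_{e_2} \geq 0.
\]
In Case 2, the outcomes are $(1, s-1)$ and $(s-1, 1)$, taken with probabilities $(1-\hat x_{e_2})/(2-s)$ and $(1-\hat x_{e_1})/(2-s)$. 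Setting $a = 1-\hat x_{e_1}$, $b = 1-\hat x_{e_2}$ and simplifying, the expected new value of $F$ equals $A + B\hat x_{e_1} + C\hat x_{e_2} + D(s-1)$, so
\[
\Ex[F(\hat x_{\text{new}}) \mid \hat x] - F(\hat x) = D\bigl[(s-1) - \hat x_{e_1}\hat x_{e_2}\bigr] = -D \cdot ab \geq 0,
\]
using the identity $(s-1) - \hat x_{e_1}\hat x_{e_2} = -(1-\hat x_{e_1})(1-\hat x_{e_2})$. Note that in both cases the marginals $\Ex[\hat x_{e_1,\text{new}}] = \hat x_{e_1}$ and $\Ex[\hat x_{e_2,\text{new}}] = \hat x_{e_2}$ are preserved, which is the reason these particular probabilities were chosen.

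The edge case where only one fractional coordinate remains and is simply rounded up to $1$ needs a separate argument, but it is harmless: such a step occurs only if the mapping guaranteed by Theorem~\ref{thm:monotone-fractional} assigns this fractional coordinate to at most one element of $\OPT_1$ (it will be useful in the feasibility argument of the next lemma), and we can either absorb it into the expectation bookkeeping or argue directly that rounding up preserves monotonicity hence $F$ cannot decrease. Finally, chaining the one-step bounds by the tower property over all steps of the while loop yields $\Ex[F(\hat x)] \geq F(x)$. The main conceptual point is just the sign of the bilinear cross term; the rest is bookkeeping. The only subtlety to flag is making sure the conditional expectation identity is applied at each step with $\hat x$ as the current random state, which is immediate since the randomness of each step is independent of the history given $\hat x$.
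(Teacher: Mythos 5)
Your proof is correct and is essentially the paper's argument in unrolled form: the paper notes that each swap step preserves the expectation of $\hat{x}$ and that $F$ is convex along the direction $\mathbf{1}_{e_1}-\mathbf{1}_{e_2}$ (which is exactly your observation that the cross-term coefficient $D$ is non-positive by submodularity), then applies Jensen's inequality and induction. Your explicit case-by-case computation of the expected change as $-D\,\hat{x}_{e_1}\hat{x}_{e_2}$ and $-D\,(1-\hat{x}_{e_1})(1-\hat{x}_{e_2})$ is just the verification of that Jensen step, and your handling of the single remaining fractional coordinate via monotonicity is fine.
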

\begin{proof}
  Note that each iteration updates the solution as follows: $\hat{x}' = \hat{x} + \delta (\mathbf{1}_{e_1} - \mathbf{1}_{e_2})$, where $\delta$ is a random value satisfying $\Ex_{\delta}[\hat{x}'] = \hat{x}$. The multilinear extension is convex along the direction $\mathbf{1}_{e} - \mathbf{1}_{e'}$ for every pair of elements $e$ and $e'$. Therefore $\Ex_{\delta}[F(\hat{x}')] \geq F(\Ex_{\delta}[\hat{x}']) = F(\hat{x})$, and the claim follows by induction.
\end{proof}

\begin{lemma}
  Let $\hat{E}$ be the set of elements corresponding to the fractional entries that were rounded to $1$. We have $c(\hat{E}) \leq c(\OPT_1)$. 
\end{lemma}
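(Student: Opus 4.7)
My plan is to maintain, as a loop invariant throughout the rounding, a fractional matching $\mu$ between the current fractional items $E$ and the current set $O$ satisfying (i)~$\sum_{o} \mu(e, o) = \hat{x}_e$ for each $e \in E$, (ii)~$\sum_{e} \mu(e, o) \leq 1$ for each $o \in O$, and (iii)~$\mu(e, o) > 0$ only if $c_e \leq c_o$. Given this invariant, each time the algorithm pairs a rounded-to-$1$ element $e$ with some $o \in O$, the inequality $c_e \leq c_o$ will follow; summing over all pairings yields $c(\hat{E}) \leq c(\OPT_1)$. Since the rounding preserves total mass, $|\hat{E}| = \sum_e \hat{x}_e \leq |\OPT_1|$, so we never run out of elements of $\OPT_1$ to pair with.

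For the base case, I take $\mu_0(e, o) = \eps \cdot |\{p : \sigma(e, p) = o\}|$, where $\sigma$ is the mapping from Theorem~\ref{thm:monotone-fractional}. Property~(a) of the theorem gives (iii) and property~(b) gives (ii). For (i), an element $e \in E$ receives $\eps$ of mass in each phase $p$ with $e \in A_p$ and zero from the integral $B_p, C_p$ stages (which set $\hat{x}_e = 1$), so $\hat{x}_e = \eps \cdot |\{p : e \in A_p\}| = \sum_o \mu_0(e, o)$.

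To verify preservation, I check Hall's threshold condition after each update: for every $c^*$, $\sum_{e \in E : c_e \geq c^*} \hat{x}_e \leq |\{o \in O : c_o \geq c^*\}|$. The analysis splits on the position of $c^*$ relative to $c_{e_1}, c_{e_2}, c_{o_1}, c_{o_2}$, using that no $e \in E$ has $c_e > c_{e_1}$. In Case~1 with $e_1$ absorbing $e_2$'s mass, the only nontrivial range is $c^* \in (c_{e_2}, c_{e_1}]$: the new LHS is $\hat{x}_{e_1}+\hat{x}_{e_2} \leq 1$ and the RHS is at least $1$, since $o_1$ (with $c_{o_1} \geq c_{e_1} \geq c^*$) contributes. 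In Case~2 with $e_1$ rounded, both sides drop by matching amounts at every relevant threshold (one mass unit from $\hat{x}_{e_1}$, one capacity unit from $o_1$), so Hall's is inherited from the old matching. The subtlest subcase is Case~2 with $e_2$ rounded: for $c^* \in (c_{e_2}, c_{o_2}]$ I need $\hat{x}_{e_1}+\hat{x}_{e_2} \leq |\{o \in O : c_o \geq c^*\}|$, which holds because both $o_1$ and $o_2$ lie in $\{o : c_o \geq c^*\}$, giving a RHS $\geq 2 > \hat{x}_{e_1}+\hat{x}_{e_2}$.

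For pairing validity: when $e_1$ is rounded to $1$, the invariant guarantees it has a matching neighbor, so $c_{e_1} \leq \max_{o \in O} c_o = c_{o_1}$. When $e_2$ is rounded in Case~2 and paired with $o_2$, I argue by contradiction: if $c_{e_2} > c_{o_2}$, then $e_1$ and $e_2$ can only be matched to $o_1$, forcing $\hat{x}_{e_1}+\hat{x}_{e_2} \leq 1$, contradicting $\hat{x}_{e_1}+\hat{x}_{e_2} > 1$ in Case~2. Case~2 is only reached when $|O| \geq 2$, since otherwise Hall's at the lowest threshold would force $\hat{x}_{e_1}+\hat{x}_{e_2} \leq 1$, so $o_2$ is always well-defined when needed. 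The main technical step is the threshold bookkeeping for Case~2; everything else follows routinely.
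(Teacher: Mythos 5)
Your proof is correct and follows essentially the same route as the paper: the paper's invariant is exactly your fractional matching $\mu$, realized concretely as the assignment sending the $i$-th unit of sorted fractional mass to the $i$-th most expensive element of $O$, and its condition that every element in the $i$-th group costs at most $c_{o_i}$ is equivalent to your Hall-type threshold condition. Your base case via the mapping $\sigma$ of Theorem~\ref{thm:monotone-fractional} and your two-case preservation argument (in particular the key point that in Case~2 both $o_1$ and $o_2$ survive above the relevant threshold, so $e_2$ may be charged to $o_2$) mirror the paper's proof.
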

\begin{proof}
  The lemma follows from the following invariant maintained by the algorithm for the partially rounded solution $\hat{x}$ and the set $O \subseteq \OPT_1$:

  {\bf Invariant:}
  Let $o_1, o_2, \dots, o_p$ be the elements of $O$, labeled such that $c_{o_1} \geq c_{o_2} \geq \dots \geq c_{o_p}$.
  Let $e_1, e_2, \dots, e_{\ell}$ be the elements corresponding to the fractional entries of $\hat{x}$, labeled such that $c_{e_1} \geq c_{e_2} \geq \dots \geq c_{e_{\ell}}$. We define the following grouping of the elements $e_1, e_2, \dots, e_{\ell}$ where each group contributes a fractional mass of $1$ and each element belongs to at most two groups. Consider the interval $[0, \sum_{i = 1}^{\ell} x_{e_i}]$ that is divided among the elements as follows: $[0, x_{e_1})$ corresponds to $e_1$ and, for all $2 \leq i \leq \ell$, $[\sum_{j = 1}^{i - 1} x_{e_j}, \sum_{j = 1}^{i} x_{e_j})$ corresponds to $e_i$. The elements that overlap with the interval $[i-1, i)$ define the $i$-th group.
  The invariant is that $\hat{x}$ and $O$ satisfy the following properties:
  \begin{enumerate}[$(1)$]
    \item $\sum_{i = 1}^{\ell} \hat{x}_{e_i} \leq |O|$, and
    \item for every $i \geq 1$ and each element $e$ in the $i$-th group, we have $c_e \leq c_{o_i}$.
  \end{enumerate}
  We will show the invariant using induction on the number of iterations. We start by showing the invariant at the beginning of the rounding algorithm. We can show the invariant for $x$ and $\OPT_1$ using Theorem~\ref{thm:monotone-fractional}.

  \begin{claim}
    The invariant holds for $x$ and $\OPT_1$.
  \end{claim}
  \begin{proof}
    Recall that each phase $p$ of the $\mathrm{KnapsackGuess}$ algorithm selects a set $A_p$ of elements and it increases the values of each of these elements by $\eps$. Thus the fractional value $x_{e_i}$ of each element $e_i \in E$ is equal to $\eps$ times the number of phases $p$ such that $e_i \in A_p$. Moreover, by Theorem~\ref{thm:monotone-fractional}, there is a mapping $\sigma: \{e_1, \dots, e_{\ell}\} \times \{1, 2, \dots, 1/\eps\} \rightarrow \OPT_1 $ such that, for each phase $p$ such that $e_i \in A_p$, $\sigma(e_i, p)$ exists and $c(e_i) \leq c(\sigma(e_i, p))$.

    We can think of each element $e_i$ having $x_{e, i} / \eps$ copies and each element $o \in \OPT_1$ having $|\sigma^{-1}(o)| \leq 1/\eps$ copies. By letting $\tilde{E}$ and $\tilde{O}$ be the copies of the elements in $E$ and $\OPT_1$ (respectively), we can equivalently view $\sigma$ as a bijection between $\tilde{E}$ and $\tilde{O}$ with the property that, if $\sigma((e, i)) = (o, j)$ then $c(e) \leq c(o)$. We may also assume that the elements of $O$ with the highest costs have $1/\eps$ copies, i.e., there exists an index $p'$ such that $o_1, \dots, o_{p'}$ have $1/\eps$ copies and $o_{p' + 1}, \dots, o_p$ have zero copies; we can ensure this property by reassigning pairs in $\tilde{E}$ to elements of $O$ with higher cost. Thus, if we sort $\tilde{E}$ and $\tilde{O}$ in non-increasing order according to costs, $\sigma$ maps the first $1/\eps$ elements of $\tilde{E}$ to $o_1$, the next $1/\eps$ elements to $o_2$, etc. Since the $i$-th consecutive block of $1/\eps$ elements of $\tilde{E}$ represents the fractional mass of the $i$-th group of elements, the second property of the invariant follows. The first property of the invariant follows from the fact that ${\|x\|_1 \over \eps} = |\tilde{E}| = |\tilde{O}| \leq {|\OPT_1| \over \eps}$.
  \end{proof}

  Now consider some iteration of the rounding algorithm, and suppose that the invariant holds at the beginning of the iteration. The invariant guarantees that the total fractional mass $\|\hat{x}\|_1$ is at most $|O|$ and, if we sort the fractional elements in non-increasing order according to the cost, the first unit of fractional mass can be assigned to the element $o_1$ with highest cost in $O$, the next unit of fractional mass can be assigned to the element $o_2$ with second-highest cost in $O$, etc. We will use such an assignment to argue that the invariant is preserved.

  Suppose we are in Case 1, i.e., $\hat{x}_{e_1} + \hat{x}_{e_2} \leq 1$, where $e_1$ and $e_2$ are the fractional elements with the highest and second-highest cost. Let $o_1$ be the element of $O$ with the highest cost. Since $\hat{x}_{e_1} + \hat{x}_{e_2} \leq 1$, it follows from the invariant that the entire fractional mass of $\hat{x}_{e_1} + \hat{x}_{e_2}$ is assigned to $o_1$. Since the rounding step moves fractional mass between $e_1$ and $e_2$, this property will continue to hold after the rounding step. If neither $e_1$ nor $e_2$ is rounded to $1$, the updated fractional solution clearly satisfies the invariant. Therefore we may assume that one of $e_1, e_2$ is rounded to $1$, and thus we must have had $\hat{x}_{e_1} + \hat{x}_{e_2} = 1$ before the rounding. Since $o_1$ is assigned a fractional mass of $1$ in total, $e_1$ and $e_2$ are the only elements assigned to $o_1$. Therefore, after removing $o_1$, $e_1$, and $e_2$, the remaining fractional entries and the set $O \setminus \{o_1\}$ satisfy the invariant.

  Suppose we are in Case 2, i.e., $1 < \hat{x}_{e_1} + \hat{x}_{e_2} \leq 2$, where $e_1$ and $e_2$ are the fractional elements with the highest and second-highest cost, respectively. Let $o_1$ and $o_2$ be the elements of $O$ with the highest and second-highest cost, respectively. It follows from the invariant that the fractional mass $\hat{x}_{e_1} + \hat{x}_{e_2}$ is assigned to $o_1$ and $o_2$ as follows: the $1$ unit of fractional mass assigned to $o_1$ is comprised of $\hat{x}_{e_1}$ from $e_1$ and $1 - \hat{x}_{e_2}$ from $e_2$, and $o_2$ is assigned the remaining $\hat{x}_{e_1} + \hat{x}_{e_2} - 1$ fractional mass of $e_2$. The rounding step either rounds $e_1$ to $1$ by moving $1 - \hat{x}_{e_1}$ mass from $e_2$ to $e_1$ or it rounds $e_2$ to $1$ by moving $1 - \hat{x}_{e_2}$ mass from $e_1$ to $e_2$. In the former case, after removing $e_1$ and $o_1$, the remaining fractional entries and the set $O \setminus \{o_1\}$ satisfy the invariant. Therefore we may assume that it is the latter, i.e., we round $e_2$ to $1$ and we remove $e_2$ and $o_2$. In this case, the fractional values on the elements $e_3, e_4, \dots$ move forward by $1 - \hat{x}_{e_2}$ to fill in the space vacated by $e_2$. We can also move forward their assignment to $O \setminus \{o_2\}$: $e_1$ remains entirely assigned to $o_1$ as before, and the assignment of each of the elements $e_3, e_4, \dots$ is shifted forward. Since we remove one unit from both the total fractional mass and $O$, every remaining element becomes assigned to an element of $O \setminus \{o_2\}$ whose cost is at least as much as the element of $O$ that it was previously assigned. Therefore the invariant is preserved. 
\end{proof}

\appendix

\section{Omitted proofs}
\label{app:omitted}

\begin{lemma}
  For every $o \in \OPT_2$, we have $f(\OPT_1 \cup \{o\}) - f(\OPT_1) \leq \eps^3 f(\OPT_1)$.
\end{lemma}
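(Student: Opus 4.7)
The plan is to use the greedy ordering of $\OPT$ together with submodularity in a telescoping argument. The intuition is that $o_1, \ldots, o_t$ were chosen as the $t$ elements with largest successive marginal gains, so after $t = 1/\eps^3$ iterations the marginal gain of any remaining element $o \in \OPT_2$ on top of $\OPT_1$ must be small, roughly $f(\OPT_1)/t$.

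More concretely, fix $o \in \OPT_2$ and, for each $i \in \{1, \ldots, t\}$, let $S_{i-1} = \{o_1, \ldots, o_{i-1}\}$. First, by submodularity the marginal value of $o$ can only decrease as we add more elements, so
\[ f(S_{i-1} \cup \{o\}) - f(S_{i-1}) \;\geq\; f(\OPT_1 \cup \{o\}) - f(\OPT_1). \]
Second, since $o \in \OPT$ and was available when $o_i$ was selected (because $o \notin \OPT_1 \supseteq S_{i-1}$), the greedy choice of $o_i$ in \eqref{eq1} gives
\[ f(S_{i-1} \cup \{o_i\}) - f(S_{i-1}) \;\geq\; f(S_{i-1} \cup \{o\}) - f(S_{i-1}). \]
Chaining the two inequalities and summing over $i = 1, \ldots, t$, the left-hand side telescopes to $f(\OPT_1) - f(\emptyset) = f(\OPT_1)$, giving
\[ f(\OPT_1) \;\geq\; t \cdot \bigl( f(\OPT_1 \cup \{o\}) - f(\OPT_1) \bigr). \]
Dividing by $t = 1/\eps^3$ yields the claim.

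I do not anticipate any real obstacle: the argument is a direct two-line application of submodularity plus the greedy-ordering maximality, with no discretization or parameter juggling. The only thing to be slightly careful about is that $o$ is a valid candidate at every step $i \leq t$, which holds because $o \in \OPT_2 = \OPT \setminus \OPT_1$ and so $o \notin S_{i-1}$ for all such $i$.
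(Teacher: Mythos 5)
Your proof is correct and matches the paper's argument exactly: both telescope $f(\OPT_1)$ over the greedy ordering, use the greedy maximality in \eqref{eq1} to lower-bound each increment by the marginal value of $o$, and then use submodularity to pass to the marginal value on top of all of $\OPT_1$. No differences worth noting.
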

\begin{proof}
  Recall that $\OPT_1$ is comprised of the first $t = 1/\eps^3$ items $\{o_1, \ldots, o_t\}$ in the Greedy ordering of $\OPT$. We have
  \begin{align*}
    f(\OPT_1) &= f(\OPT_1) - f(\emptyset)\\
    &= \sum_{i = 1}^t (f(\{o_1, \ldots, o_i\}) - f(\{o_1, \ldots, o_{i - 1}))\\
    &\geq \sum_{i = 1}^t (f(\{o_1, \ldots, o_{i - 1}\} \cup \{o\}) - f(\{o_1, \ldots, o_{i- 1}\}))\\
    &\geq t \cdot (f(\OPT_1 \cup \{o\}) - f(\OPT_1)).
  \end{align*}
  The first inequality follows from the definition of $o_i$, and the second inequality follows from submodularity.
\end{proof}

\begin{corollary}
  Let $\OPT'_2$ be the subset of $\OPT_2$ consisting of all the elements $o \in \OPT_2$ of cost $c_o \leq \eps^2(1 - c(\OPT_1))$. We have $f(\OPT_1 \cup \OPT'_2) \geq (1 - \eps) f(\OPT)$.
\end{corollary}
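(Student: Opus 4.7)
The plan is to use the previous lemma to bound the loss from removing the heavy items from $\OPT_2$. Let $H = \OPT_2 \setminus \OPT'_2$ denote the set of heavy items, i.e., items $o \in \OPT_2$ with $c_o > \eps^2(1 - c(\OPT_1))$. First I would bound $|H|$: since $c(\OPT_2) \leq 1 - c(\OPT_1)$ (feasibility of $\OPT$) and every element of $H$ has cost strictly greater than $\eps^2(1 - c(\OPT_1))$, there can be at most $1/\eps^2$ such items.

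Next, I would bound the marginal contribution of $H$ on top of $\OPT_1 \cup \OPT'_2$. By submodularity, for any ordering $h_1, h_2, \ldots, h_{|H|}$ of $H$,
\begin{align*}
f(\OPT_1 \cup \OPT'_2 \cup H) - f(\OPT_1 \cup \OPT'_2) &= \sum_{i=1}^{|H|} \bigl( f(\OPT_1 \cup \OPT'_2 \cup \{h_1,\ldots,h_i\}) - f(\OPT_1 \cup \OPT'_2 \cup \{h_1,\ldots,h_{i-1}\}) \bigr) \\
&\leq \sum_{i=1}^{|H|} \bigl( f(\OPT_1 \cup \{h_i\}) - f(\OPT_1) \bigr),
\end{align*}
where the inequality follows from submodularity applied to each summand (the ground set on the left contains $\OPT_1$ as a subset). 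By the preceding lemma, each marginal gain on the right is at most $\eps^3 f(\OPT_1)$, so the total loss is at most $|H| \cdot \eps^3 f(\OPT_1) \leq (1/\eps^2)\cdot \eps^3 f(\OPT_1) = \eps f(\OPT_1) \leq \eps f(\OPT)$.

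Finally, since $\OPT = \OPT_1 \cup \OPT_2 = \OPT_1 \cup \OPT'_2 \cup H$, we obtain
\[ f(\OPT_1 \cup \OPT'_2) \geq f(\OPT) - \eps f(\OPT) = (1-\eps) f(\OPT), \]
as required. There is no real obstacle here; the only subtlety is making sure submodularity is applied at the correct base set so that each marginal term becomes bounded by the lemma's guarantee, which is why we telescope at $\OPT_1 \cup \OPT'_2$ rather than at $\OPT_1$ directly.
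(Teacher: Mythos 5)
Your proof is correct and follows the same approach as the paper: bound the number of heavy items by $1/\eps^2$ using feasibility, bound each one's marginal contribution by $\eps^3 f(\OPT_1)$ via the preceding lemma, and combine by telescoping with submodularity. The paper's version is simply terser (and contains a typo, saying ``at most $\eps^2$ items'' where it means $1/\eps^2$); your write-up fills in the submodularity step that the paper leaves implicit.
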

\begin{proof}
  Since $c(\OPT_1) + c(\OPT_2) \leq 1$, there are at most $\eps^2$ items in $\OPT_2$ with cost greater than $\eps^2 (1 - c(\OPT_1))$. Since each of them has marginal value on top of $\OPT_1$ of at most $\eps^3 f(\OPT)$, the claim follows.
\end{proof}

\bibliographystyle{abbrv}
\bibliography{submodular}
\end{document}